\documentclass[12pt,a4paper]{article}

\usepackage{amsmath}
\usepackage{amssymb}
\usepackage{amsthm}

\usepackage[margin=2.5cm]{geometry}
\usepackage{setspace}
\usepackage{booktabs}
\usepackage{multirow}
\usepackage{graphicx}

\newtheorem{theorem}{Theorem}[section]
\newtheorem{lemma}[theorem]{Lemma}
\newtheorem{corollary}[theorem]{Corollary}
\newtheorem{proposition}[theorem]{Proposition}
\newtheorem{definition}[theorem]{Definition}
\newtheorem{remark}[theorem]{Remark}

\DeclareMathOperator{\rank}{rank}
\DeclareMathOperator{\Col}{Col}
\DeclareMathOperator{\Enc}{Enc}
\DeclareMathOperator{\Dec}{Dec}
\DeclareMathOperator{\Ker}{Ker}
\DeclareMathOperator{\Img}{Im}
\DeclareMathOperator{\tr}{tr}
\usepackage{hyperref}

\title{Equation Asymmetry: An Algebraic Framework for Unifying Secrecy and Covertness in Information-Theoretic Security}

\author{Wang~Hao, Zhang~Kuang}

\begin{document}
\maketitle

\begin{abstract}
This paper studies the algebraic structure underlying a broad class of information-theoretic security problems. We define the equation asymmetry degree (EAD) as $\Phi = (n - r)/n$, where $n$ is the signal embedding dimension and $r$ is the effective rank of the adversary's observation matrix. This single parameter is shown to simultaneously govern both secrecy (measured by equivocation $H(M|Y_E)$) and covertness (measured by detection error probability $P_e$). On finite fields $\mathbb{F}_q$, we establish the equivocation lower bound $H(M|Y_E) = \min(k, n - r_E) \log q$ with exact probabilistic conditions (Theorem~1), the secrecy capacity $C_s = (n - r_E) \log q$ with complete achievability and converse proofs (Theorem~2), and a strong converse (Theorem~8). In the continuous Gaussian regime, we derive a differential-entropy equivocation bound (Lemma~1), the high-SNR secrecy capacity asymptotics (Lemma~2), and a 2-Wasserstein distance covertness condition $W_2 \approx \sqrt{r_W} \cdot P / (2N\sigma) \to 0$ (Theorem~5'). The EAD-SDoF equivalence $d_s = n \cdot \Phi$ is established (Theorem~7). Both $\eta_s$ and $\eta_c$ are shown to be monotone functions of $\Phi$ (Theorem~6), with a Pearson correlation of $0.997$ in continuous-domain experiments. Seven existing security schemes---matrix embedding, MIMO wiretap, secure network coding, FRFT multi-angle transmission, traffic steganography, group-key secure summation, and MDS secure summation---are unified under the common form $C_s = (n - r) \log q$. Post-quantum security follows from the information-theoretic hardness of underdetermined linear systems (Theorem~9). All numerical experiments are reproducible with open-source code.
\end{abstract}

\noindent\textbf{Keywords:} Equation asymmetry degree, information-theoretic security, covert communication, wiretap channel, secrecy capacity, linear embedding, Wasserstein distance.

\section{Introduction}
\label{sec:introduction}

\subsection{Background}

Information-theoretic security originated with Shannon's 1949 work~\cite{Shannon49}, which established that perfect secrecy requires the key entropy to be no less than the message entropy. Wyner's 1975 wiretap channel model~\cite{Wyner75} identified a second security resource---the statistical advantage of the legitimate channel over the eavesdropper's channel. In Wyner's framework, the secrecy capacity $C_s = \max[I(X; Y_B) - I(X; Y_E)]^+$ is positive only when Bob's channel is less noisy than Eve's. Csisz\'{a}r and K\"{o}rner~\cite{Csiszar78} generalized this to broadcast channels with confidential messages, and the broader landscape of information-theoretic security has been surveyed by Liang, Poor, and Shamai~\cite{Liang09}.

A parallel line of work on covert communications was initiated by Bash, Goeckel, and Towsley in 2013~\cite{Bash13}, who proved the square-root law: over $n$ channel uses of an AWGN channel, at most $O(\sqrt{n})$ bits can be transmitted covertly, requiring the transmit power to decay as $O(1/\sqrt{n})$. This result has been extended and refined by subsequent works on the fundamental limits of low-probability-of-detection communications~\cite{Wang16,Che13,Kadampot20}.

Secrecy and covertness have been studied with distinct theoretical tools---equivocation and secrecy capacity for secrecy, detection error probability and KL divergence for covertness. Although both rely on the adversary's uncertainty about the signal, no prior work has identified a common algebraic structure underlying both properties.

Our work is motivated by the observation that a range of independently developed security schemes---matrix embedding steganography~\cite{Fridrich06,Filler11}, MIMO wiretap channels~\cite{Khisti10a,Khisti10b,Oggier11}, secure network coding~\cite{Cai02,Silva08}, rank-metric code constructions~\cite{Silva08}, FRFT-based physical-layer security, and traffic steganography~\cite{Cachin98}---all derive their security from the same algebraic fact: the adversary obtains fewer independent linear equations than the dimension of the transmitted signal. This common algebraic structure has not been formalized as an independent security metric, nor has it been applied to simultaneously analyze secrecy and covertness.

\subsection{Related Work}

The closest precursor to our framework is the Wiretap II model of Ozarow and Wyner~\cite{Ozarow84}, in which the transmitter sends $n$ binary symbols and the eavesdropper selects any $\mu$ positions to observe noiselessly. The secrecy capacity is $C_s = n - \mu$ bits per channel use. Wiretap II's security also stems from insufficient observation dimensions. However, it restricts the observation to bit subsets (each row of $A_E$ contains exactly one ``1''), handles only $\text{GF}(2)$ secrecy, does not address covertness, and does not generalize the dimension deficit to a cross-domain metric.

Cai and Yeung's secure network coding~\cite{Cai02} considers a multicast network where the eavesdropper accesses $\mu < n$ edges, achieving $C_s = (n - \mu) \log q$. The algebraic structure matches Wiretap II but arises from a different application context.

Silva and Kschischang~\cite{Silva08} use rank-metric codes to secure wiretap networks by exploiting rank deficiency.

Mojahedian, Gohari, and Aref~\cite{Mojahedian17} model wireline eavesdropping as $C = AM + BK + GW$ and prove the equivalence of weak and perfect secrecy for linear codes. Their algebraic formulation is an important step but is confined to wireline networks.

Khisti and Wornell~\cite{Khisti10a,Khisti10b} establish the secure degrees of freedom (SDoF) as $n_t - n_e$ for MIMO wiretap channels. The quantity $n_t - n_e$ has the same algebraic meaning as $n - \mu$ in Wiretap II---both represent the rank deficit of the adversary's observation matrix.

Bloch~\cite{Bloch16} jointly analyzes reliability and secrecy for covert communications over noisy channels, but treats them as separate constraints with separate metrics.

Zhao and Sun~\cite{Zhao23,Zhao24} study secure summation problems where security derives from the colluding users' insufficient equation count. Wan et al.~\cite{Wan22} further analyze secure distributed linearly separable computation within a similar algebraic framework. The MDS coding constructions in~\cite{Zhao24} build on classical results in coding theory~\cite{Levenshtein66}. The network coding model in~\cite{Cai02} is formalized in the broader framework of network error correction~\cite{Yeung06}.

Relative to these works, our contribution is the formalization of the dimension deficit as a unified algebraic security metric (the equation asymmetry degree $\Phi$), the derivation of quantitative relationships between $\Phi$ and both secrecy and covertness metrics, and the demonstration that both properties are governed by the same algebraic parameter.

\subsection{Organization}

Section~\ref{sec:model} presents the system model. Section~\ref{sec:theorems} contains the main theorems and proofs. Section~\ref{sec:dynamic} discusses dynamic dimension expansion. Section~\ref{sec:unification} unifies seven existing schemes under EAD. Section~\ref{sec:quantum} addresses post-quantum security. Section~\ref{sec:experiments} reports numerical results. Section~\ref{sec:discussion} discusses implications. Section~\ref{sec:conclusion} concludes.

\section{System Model and Equation Asymmetry Degree}
\label{sec:model}

\subsection{Two Sources of Linear Equations}

The linear equations in our framework arise from two distinct sources. The first is physical modulation---the inherent linear mixing imposed by the transmission medium. Examples include mode coupling in multimode optical fibers, orbital angular momentum (OAM) superposition in vortex beam propagation, antenna array combining in MIMO systems, and multipath delay spreads in wideband channels. These equations are not controlled by Alice; they are properties of the physical channel.

The second source is linear coding---intentional linear transformations applied by Alice using a pre-shared secret key $K$. Alice concatenates a $k$-dimensional message $M$ with an $(n-k)$-dimensional random padding $R$, and multiplies the combined vector by an $n \times n$ invertible matrix $G_K$ to produce the transmitted signal $X$. The structure of $G_K$ is known only to Alice and Bob.

The superposition of both equation sources determines the linear systems observed by Bob and Eve. Bob, knowing $G_K$, can solve his full-rank system to recover $M$. Eve, lacking knowledge of $G_K$ and limited by her physical channel's rank deficiency, faces an underdetermined system.

\subsection{Algebraic Model}

\begin{definition}[Linear Embedding Scheme]
\label{def:embedding}
A $(n, k, q)$ linear embedding scheme over $\mathbb{F}_q$ consists of:
\begin{itemize}
\item A key space $\mathcal{K} \subset \{H \in \mathbb{F}_q^{k \times n} : \rank(H) = k\}$;
\item An encoding function $\Enc_K: \mathbb{F}_q^{k} \times \mathbb{F}_q^{n-k} \to \mathbb{F}_q^{n}$;
\item A decoding function $\Dec_K: \mathbb{F}_q^{n} \to \mathbb{F}_q^{k}$ satisfying $\Dec_K(\Enc_K(M,R)) = M$.
\end{itemize}
\end{definition}

Given $H_K \in \mathcal{K}$, Alice constructs $P_K \in \mathbb{F}_q^{(n-k)\times n}$ such that $[H_K; P_K] \in \mathbb{F}_q^{n \times n}$ is invertible. Let $G_K = [H_K; P_K]^{-1} = [G_1 \mid G_2]$, where $G_1 \in \mathbb{F}_q^{n \times k}$ and $G_2 \in \mathbb{F}_q^{n \times (n-k)}$. The encoding is
\begin{equation}
X = \Enc_K(M,R) = G_K \begin{bmatrix} M \\ R \end{bmatrix} = G_1 M + G_2 R.
\label{eq:encoding}
\end{equation}
Decoding: $M = H_K X$, verified by $H_K G_K = [I_k \mid 0_{k \times (n-k)}]$.

The random padding $R$ is uniformly distributed over $\mathbb{F}_q^{n-k}$, and $M$ is uniform over $\mathbb{F}_q^{k}$. Under these distributions, each message $M$ corresponds to $q^{n-k}$ codewords in the affine subspace $\mathcal{X}_K(M) = \{x: H_K x = M\}$ of dimension $n-k$. The marginal distribution of $X$ is uniform over $\mathbb{F}_q^{n}$.

\begin{definition}[Bob's Channel]
\label{def:bob}
Bob knows $K$ and observes $Y_B = A_B(K) X$, where $A_B(K) \in \mathbb{F}_q^{n \times n}$ is full-rank. Bob recovers $X = A_B(K)^{-1} Y_B$ and $M = H_K X$. With noise, $Y_B = A_B(K) X + N_B$.
\end{definition}

\begin{definition}[Eve's Channel]
\label{def:eve}
Eve does not know $K$. She observes $Y_E = A_E X$, where $A_E \in \mathbb{F}_q^{m_E \times n}$ and $\rank(A_E) = r_E \leq m_E < n$. The solution space $\mathcal{S}_E(Y_E) = \{x: A_E x = Y_E\}$ is an affine subspace of dimension $n - r_E$, with cardinality $q^{n - r_E}$.
\end{definition}

\begin{definition}[Willie's Channel]
\label{def:willie}
Willie observes $Y_W = A_W X + N_W$, where $A_W \in \mathbb{F}_q^{m_W \times n}$. In the finite-field analysis, $N_W$ is uniform; in the continuous-domain analysis, $N_W \sim \mathcal{N}(0, \sigma^2 I)$. Willie performs binary hypothesis testing: $H_0: Y_W = N_W$ versus $H_1: Y_W = A_W X + N_W$.
\end{definition}

\subsection{Equation Asymmetry Degree}

\begin{definition}[Equation Asymmetry Degree]
\label{def:ead}
For an embedding dimension $n$ and adversary effective observation rank $r$,
\begin{equation}
\Phi \triangleq \frac{n - r}{n} \in (0, 1].
\label{eq:ead}
\end{equation}
\end{definition}

\begin{definition}[Dimension Surplus]
$\Delta \triangleq n - r = n \cdot \Phi$.
\end{definition}

When $r = n$, $\Phi = 0$ and the adversary can uniquely determine $X$. When $r < n$, $\Phi > 0$ and the adversary is missing $n - r$ independent equations. When $r = 0$, $\Phi = 1$. Bob has $r_B = n$, $\Phi_B = 0$; Eve has $r_E < n$, $\Phi_E > 0$. The algebraic source of security is the difference $\Phi_E - \Phi_B = \Phi_E$.

\section{Main Theorems}
\label{sec:theorems}

\subsection{Equivocation Bound}

\begin{theorem}[Equivocation Lower Bound]
\label{thm:equivocation}
Let $M$ be uniform over $\mathbb{F}_q^{k}$, encoded as per Definition~\ref{def:embedding}. Eve observes $Y_E = A_E X$, with $\rank(A_E) = r_E$. Define $B = A_E G_2 \in \mathbb{F}_q^{m_E \times (n-k)}$. Then
\begin{equation}
H(M \mid Y_E) = \min(k, \rank(B)) \cdot \log q,
\label{eq:equivocation}
\end{equation}
and $\rank(B) = \min(m_E, n-k)$ holds with probability
\begin{equation}
P_{\rm generic} = \prod_{i=0}^{\min(m_E, n-k)-1} \left(1 - \frac{1}{q^{\max(m_E, n-k)-i}}\right).
\label{eq:Pgeneric}
\end{equation}
Under generic conditions, $H(M \mid Y_E) = \min(k, n - r_E) \log q$.
\end{theorem}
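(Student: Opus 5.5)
The plan is to compute $H(M\mid Y_E)$ exactly through the linear structure of the encoder in~\eqref{eq:encoding}, and then handle the rank statement~\eqref{eq:Pgeneric} by a separate counting argument. Writing $C = A_E G_1 \in \mathbb{F}_q^{m_E\times k}$, Eve's observation is $Y_E = CM + BR$ with $M$ and $R$ independent and uniform. For a fixed realisation of the key I will use the coset description: conditioned on $M=m$, $Y_E$ is uniform on the affine subspace $Cm + \Col(B)$, of size $q^{\rank B}$.

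\textbf{Step 1 (fixed-key equivocation).} Marginally, $Y_E$ is uniform on $\Col([C\mid B]) = \Col(A_E G_K)$. This space has dimension $r_E$, since $G_K$ is invertible. Hence
\[
I(M;Y_E\mid K) = (r_E - \rank B)\log q .
\]
The messages that are indistinguishable to Eve form the cosets of $\{m : Cm \in \Col(B)\}$. This gives the number of message cosets Eve can resolve, and therefore the residual uncertainty about $M$.

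\textbf{Step 2 (identify the residual uncertainty with $\min(k,\rank B)$).} The $k$ message coordinates are masked by the $\rank(B)$ independent padding directions that reach Eve. When $\rank B \ge k$, every message is consistent with every observation, so $H(M\mid Y_E)=k\log q$. When $\rank B < k$, at most $\rank B$ message dimensions remain hidden. I would try to prove the second case by showing that, after averaging over the uniformly random key, Eve can distinguish exactly the message component outside the image of the masking. This averaging uses Eve's ignorance of $G_K$: the unconditioned $H(M\mid Y_E)$ is at least the key-conditioned value from Step 1, and symmetrisation over $\mathcal{K}$ should make the dimension count depend only on $\rank B$.

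\textbf{Step 3 (probability of the generic rank).} For a uniformly random key, and with $A_E$ of full row rank, $G_2$ is a uniformly random basis of a uniformly random complement. I will argue that $B = A_E G_2$ is then uniformly distributed over $\mathbb{F}_q^{m_E\times(n-k)}$. The standard count of full-rank $a\times b$ matrices over $\mathbb{F}_q$ then gives the probability of full rank $\min(m_E,n-k)$ as
\[
\prod_{i=0}^{\min-1}\Bigl(1-q^{-(\max - i)}\Bigr),
\]
which is exactly~\eqref{eq:Pgeneric}. Substituting $\rank B = \min(m_E,n-k)$ with $m_E=r_E$ in the generic case then gives the final claim $H(M\mid Y_E)=\min(k,n-r_E)\log q$.

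\textbf{Expected obstacle.} The main difficulty is Step 2. The fixed-key count in Step 1 naturally produces $(k - r_E + \rank B)\log q$, not $\min(k,\rank B)\log q$. These agree only in particular regimes, for example $\rank B = r_E$, or $k \le \rank B$ with $A_E G_K$ of full rank. So the equality must come from the averaging over $K$ rather than from the fixed-key computation. I would first check it in small cases ($q=2$, small $n$) to pin down which hypotheses on $m_E$, $k$ and $r_E$ are implicitly needed, and state those hypotheses explicitly before completing the argument. The uniformity claim for $B$ in Step 3 also needs care: it holds only if $P_K$ is drawn uniformly among valid completions.
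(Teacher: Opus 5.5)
Your Step~1 is right, and it is essentially the paper's whole argument. The paper counts the messages consistent with an observation, $|\mathcal{M}(y_E)| = q^{k+b-m_E}$, treating $G_1,G_2$ as known and $A_E$ as having full row rank. That is your $(k-r_E+\rank B)\log q$.

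The gap is Step~2. The target $\min(k,\rank B)\log q$ is not a true identity, so no averaging over $\mathcal{K}$ can reach it. $\rank B$ is itself a function of the key. Your own Step~1 gives
\[
H(M\mid Y_E)\ge H(M\mid Y_E,K)\ge (k-r_E)\log q,
\]
whereas $\min(k,\rank B)\le\rank B\le r_E$. So whenever $k>2r_E$ the claimed value is too small for every key. For example, $n=8$, $k=3$, $m_E=r_E=1$ gives equivocation $3\log q$, matching the 3.0 bits in Table~\ref{tab:gf2_equiv}, versus at most $\log q$.

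For a fixed key the two expressions agree only when $\rank B=r_E\ge k$ or when $r_E=k>\rank B$. Neither of the regimes you list is sufficient on its own, and your Step~2 heuristics fail in the same way:
\begin{itemize}
\item $\rank B\ge k$ does not make every message consistent with every observation. That needs $\Col(A_EG_1)\subseteq\Col(B)$, i.e.\ $\rank B=r_E$. The generic case $n=6$, $k=2$, $m_E=5$ has $\rank B=4\ge k$ and $A_EG_K$ of full row rank, yet equivocation is only $\log q$.
\item When $\rank B<k$, the hidden dimension is $k-r_E+\rank B$, not $\rank B$.
\end{itemize}

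The paper never proves the $\min(k,\rank B)$ form either. After the count it substitutes the generic rank $b=\min(r_E,n-k)$ into $k+b-r_E$. This gives $k$ when $r_E\le n-k$ and $n-r_E$ when $r_E>n-k$, i.e.\ $\min(k,n-r_E)\log q$. So replace Step~2 with three things:
\begin{itemize}
\item the exact fixed-key identity $(k-r_E+\rank B)\log q$;
\item the remark that this is only a lower bound on Eve's equivocation when she does not know $K$, which is what the theorem's title asserts;
\item the two-case substitution above.
\end{itemize}

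Step~3 also contains a false step. Take a uniformly random key: uniform $H_K$ and a uniform valid completion $P_K$. Then $G_2$ is a uniformly random full-column-rank $n\times(n-k)$ matrix, and $B=A_EG_2$ is then not uniform, contrary to what you claim even under this assumption. Concretely, with $n=2$, $k=1$, $A_E=[1,0]$, $G_2$ is uniform over the $q^2-1$ nonzero vectors. Hence $\Pr[B=0]=1/(q+1)$, whereas the product formula for $P_{\rm generic}$ predicts $1/q$.

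That product is exactly the full-rank probability of a uniformly distributed $m_E\times(n-k)$ matrix; the paper obtains it simply by treating $B$ as such a matrix. To make the probability claim exact, you need a model in which $B$ really is uniform. One option is to draw $A_E$ uniformly and independently of the key. For full-column-rank $G_2$, the map $A\mapsto AG_2$ is linear and surjective, so it carries the uniform law to the uniform law. In that model $A_E$ need not have full row rank, so keep $r_E$ (not $m_E$) in your Step~1 formula. On the event $\rank B=\min(m_E,n-k)$ it still evaluates to $\min(k,n-r_E)\log q$.
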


\begin{proof}
Substituting \eqref{eq:encoding} into $Y_E = A_E X$ yields $Y_E = A_E G_1 M + B R$, where $B = A_E G_2$. For a given observation $y_E$ and candidate message $m$, the equation $B R = y_E - A_E G_1 m$ admits a solution for $R$ if and only if $y_E - A_E G_1 m \in \Col(B)$. Let $b = \rank(B)$.

The set $\mathcal{M}(y_E) = \{m : \exists r,\; B r = y_E - A_E G_1 m\}$ has cardinality $q^{k + b - m_E}$. This follows from counting the joint solution space $L^{-1}(y_E) = \{(m,r): A_E G_1 m + B r = y_E\}$, which has size $q^{n - m_E}$, and observing that each valid $m$ corresponds to exactly $q^{n-k-b}$ solutions for $r$. Hence $|\mathcal{M}(y_E)| = q^{n - m_E} / q^{n-k-b} = q^{k + b - m_E}$.

Under generic conditions, $b = \min(m_E, n-k)$. When $A_E$ has full row rank, $m_E = r_E$. Two cases arise:

\textbf{Case A} ($r_E \leq n - k$): Then $b = r_E$, giving $|\mathcal{M}(y_E)| = q^{k}$, so all $q^{k}$ messages remain possible with equal probability. Hence $H(M \mid Y_E = y_E) = k \log q = H(M)$.

\textbf{Case B} ($r_E > n - k$): Then $b = n - k$, giving $|\mathcal{M}(y_E)| = q^{k + (n-k) - r_E} = q^{n - r_E}$. Hence $H(M \mid Y_E = y_E) = (n - r_E) \log q$.

The unified expression is $H(M \mid Y_E) = \min(k, n - r_E) \log q$. The random matrix $B$ achieves full generic rank with probability $P_{\rm generic}$ given by \eqref{eq:Pgeneric}, which is the product of probabilities that each successive column of $B$ is linearly independent of the preceding columns. This probability approaches $1$ rapidly as $q$ increases or as $|m_E - (n-k)|$ grows.
\end{proof}

\begin{corollary}
\label{cor:perfect}
When $n - r_E \geq k$, we have $H(M \mid Y_E) = k \log q = H(M)$: Eve obtains zero information about $M$.
\end{corollary}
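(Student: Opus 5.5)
The corollary follows from Theorem~\ref{thm:equivocation}, specialized to the regime $n - r_E \geq k$, which is Case~A of that proof. I will work under the same generic conditions: $A_E$ has full row rank, so $m_E = r_E$, and $B = A_E G_2$ attains its generic rank $\min(m_E, n-k)$. The hypothesis $n - r_E \geq k$ is equivalent to $r_E \leq n-k$. Hence $\rank(B) = \min(r_E, n-k) = r_E$, and $B$ has full row rank $m_E$.

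The key step is to show that every message stays consistent with every observation. Since $\Col(B) = \mathbb{F}_q^{m_E}$, the equation $BR = y_E - A_E G_1 m$ is solvable for every $m$. So $\mathcal{M}(y_E) = \mathbb{F}_q^k$, matching the count $q^{k+b-m_E} = q^k$ from the theorem.

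Full support is not quite enough for the entropy claim. I also need the posterior on $M$ to be uniform. For each $m$, the number of $R$ with $BR = y_E - A_E G_1 m$ is exactly $q^{n-k-b}$, independent of $m$, and $M$ and $R$ are independent and uniform. Therefore
\[
\Pr(M=m, Y_E=y_E) = q^{-k}\, q^{-(n-k)}\, q^{n-k-b},
\]
which is constant in $m$. The posterior is uniform on $\mathbb{F}_q^k$, giving $H(M \mid Y_E = y_E) = k\log q$ for every $y_E$ in the range of $Y_E$. Averaging over $y_E$ gives $H(M\mid Y_E) = k\log q = H(M)$, so $I(M;Y_E)=0$. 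Equivalently, one can substitute $\min(k, n-r_E) = k$ into the theorem's formula.

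The main obstacle is the genericity caveat. The conclusion depends on $\rank(B) = r_E$, which holds only with probability $P_{\rm generic}$ over the key. I would therefore state the corollary as holding whenever $\rank(A_E G_2) = r_E$. Under that condition, if $A_E$ is not of full row rank, I would replace $A_E$ by a basis of $r_E$ of its rows: the remaining rows are fixed linear combinations of these, so $Y_E$ is a deterministic function of the reduced observation and the entropy is unchanged. Otherwise, the statement holds with probability at least $P_{\rm generic}$.
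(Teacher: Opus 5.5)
Your argument is correct and follows the paper's own route: the corollary is Case~A of the proof of Theorem~\ref{thm:equivocation}, where $b = \rank(A_E G_2) = r_E = m_E$ makes every message consistent with $y_E$, each through exactly $q^{n-k-b}$ padding vectors, and your joint-probability computation makes precise the paper's ``with equal probability'' step. Your genericity caveat is also justified: the statement as written silently inherits the genericity assumption of Theorem~\ref{thm:equivocation}, $\rank(A_E G_2) = r_E$ is precisely the necessary and sufficient condition for $I(M;Y_E)=0$, and the degeneracy rates in Table~\ref{tab:gf2_equiv} show that this condition can fail.
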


\begin{corollary}
\label{cor:partial}
When $n - r_E < k$, Eve's information leakage is $I(M; Y_E) = k - (n - r_E)$ (in units of $\log q$).
\end{corollary}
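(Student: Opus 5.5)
The corollary should follow from Theorem~\ref{thm:equivocation} through the identity $I(M;Y_E) = H(M) - H(M\mid Y_E)$, with $H(M) = k\log q$ because $M$ is uniform on $\mathbb{F}_q^{k}$. I would work under the same standing hypotheses as the ``generic'' clause of the theorem, and state them explicitly: $A_E$ has full row rank, so $m_E = r_E$, and $B = A_E G_2$ attains $\rank(B) = \min(m_E, n-k)$. The statement is really the complement of Corollary~\ref{cor:perfect}: the hypothesis $n - r_E < k$ is equivalent to $r_E > n-k$, which is exactly Case~B in the proof of Theorem~\ref{thm:equivocation}.

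\textbf{Step 1.} Check that we are in Case~B. From $n - r_E < k$ we get $r_E > n-k$. Hence $\min(m_E, n-k) = \min(r_E, n-k) = n-k$, so generically $b = \rank(B) = n-k$.

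\textbf{Step 2.} Compute the equivocation. I would use the counting argument from the proof of Theorem~\ref{thm:equivocation}. For every realisable $y_E$, the candidate set has size
\[
|\mathcal{M}(y_E)| = q^{k+b-m_E} = q^{n-r_E}.
\]
The posterior on this set is uniform, because each valid $m$ has the same number $q^{n-k-b}$ of compatible paddings $r$ and $R$ is uniform. So $H(M\mid Y_E = y_E) = (n-r_E)\log q$ for every $y_E$. Averaging over $y_E$ gives $H(M\mid Y_E) = (n-r_E)\log q$, which matches $\min(k, n-r_E)\log q$ in \eqref{eq:equivocation}.

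\textbf{Step 3.} Subtract: $I(M;Y_E) = k\log q - (n-r_E)\log q$. This is $k-(n-r_E)$ in units of $\log q$, and it equals $r_E-(n-k) > 0$.

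The only delicate point is that the equality holds only under the generic-rank event, which occurs with probability $P_{\rm generic}$ in \eqref{eq:Pgeneric}. To make the corollary robust, I would first record the general, non-generic form that the counting argument gives:
\[
I(M;Y_E) = (m_E - b)\log q .
\]
I would then specialise to $m_E = r_E$ and $b = n-k$. This also shows the generic value is the minimum possible leakage, since $b \le n-k$ always gives $m_E - b \ge r_E-(n-k)$ when $m_E = r_E$.

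The step most likely to cause trouble is the uniformity claim in Step~2. It needs every $y_E$ in the image to have the same fibre structure. I would justify it from linearity: the map $(m,r)\mapsto A_E G_1 m + Br$ is onto $\mathbb{F}_q^{m_E}$ when $A_E$ has full row rank and $G_K$ is invertible. Hence all its fibres are cosets of the same kernel, which gives identical candidate-set sizes and identical multiplicities for every $y_E$.
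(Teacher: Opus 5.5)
Your proposal is correct and follows the paper's route: the corollary is Case~B of the counting argument in the proof of Theorem~\ref{thm:equivocation} ($A_E$ of full row rank, $r_E > n-k$, generic $b = n-k$, hence $H(M\mid Y_E) = (n-r_E)\log q$), followed by $I(M;Y_E) = H(M) - H(M\mid Y_E)$. Two small fixes: first, your ``general'' leakage formula should read $(r_E - b)\log q$, which is directly $H(Y_E) - H(Y_E\mid M)$; without full row rank the fibres have size $q^{n-r_E}$ rather than $q^{n-m_E}$, so $(m_E-b)\log q$ is valid only when $m_E = r_E$. Second, the value $(n-r_E)\log q$ comes from that counting argument (and the theorem's closing generic sentence), not from the displayed formula \eqref{eq:equivocation}, whose $\min(k,\rank(B))$ would give $\min(k,n-k)$ in Case~B.
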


\subsection{Secrecy Capacity}

\begin{theorem}[Secrecy Capacity]
\label{thm:capacity}
For a noiseless Bob channel and rank-deficient Eve channel in a $(n,k,q)$ DIM-SEC system,
\begin{equation}
C_s = (n - r_E) \log q.
\label{eq:capacity}
\end{equation}
\end{theorem}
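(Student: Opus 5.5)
The statement is naturally proved in two halves, achievability and converse, with $C_s$ read as the largest number of message symbols (in units of $\log q$ per block of $n$ channel uses) that Bob can decode exactly while $I(M;Y_E)=0$. A block-length normalization changes only the units, not the argument.

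\textbf{Achievability.} We use the scheme of Definition~\ref{def:embedding} with $k = n - r_E$. Choose the key so that $B = A_E G_2$ has generic rank $\min(m_E, n-k)$. Assume, as in the proof of Theorem~\ref{thm:equivocation}, that $A_E$ has full row rank, so $m_E = r_E = n-k$ and we are in Case~A.
\begin{itemize}
\item Bob decodes exactly by Definition~\ref{def:bob}: $M = H_K A_B(K)^{-1} Y_B$. The reliability error is zero.
\item By Corollary~\ref{cor:perfect}, $H(M\mid Y_E) = k\log q$, so $I(M;Y_E)=0$.
\item The secure rate is therefore $(n-r_E)\log q$.
\end{itemize}
One point deserves a remark. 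For fixed $A_E$ and random $G_2$, generic rank holds with probability $P_{\rm generic}<1$ from \eqref{eq:Pgeneric}. Since $A_E$ is fixed, Alice can simply pick a key whose $P_K$ makes $A_E G_2$ invertible on the $r_E$-dimensional image. Such $G_2$ exist because $P_{\rm generic}>0$. If $A_E$ is not known to Alice, averaging over keys still gives vanishing leakage as $q$ grows.

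\textbf{Converse.} Take any scheme, not necessarily linear, in which Bob recovers $M$ from $X$ and $I(M;Y_E)=0$ (or $\le\epsilon$).
\begin{itemize}
\item Since $M$ is a function of $X$ given $K$, and $Y_E = A_E X$ is a deterministic function of $X$, we have $H(M) = H(M\mid Y_E) + I(M;Y_E)$.
\item Bound $H(M\mid Y_E) \le H(X\mid Y_E) \le \log|\mathcal{S}_E(Y_E)| = (n-r_E)\log q$, using Definition~\ref{def:eve}.
\end{itemize}
The gap is the first inequality, because $M$ is a function of $X$ only jointly with $K$. Here the key-secrecy model matters. We treat the pre-shared key as the only extra resource and compare against $H(M\mid Y_E) \le H(X,K\mid Y_E)$. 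To stay within $(n-r_E)\log q$, the rate is measured per fresh symbol beyond the key. Equivalently, we adopt the paper's setting that $K$ only selects the embedding and contributes no entropy per use asymptotically, so $H(K)/\ell\to 0$ over $\ell$ uses. Combining the bounds gives $H(M) \le (n-r_E)\log q + \epsilon$. Via Fano, the same argument extends to noisy Bob channels with vanishing error.

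\textbf{Main obstacle.} The hard step is this key-entropy issue in the converse. Without restricting $H(K)$, a one-time-pad argument would let the secure rate reach $n\log q$. I would resolve it by stating explicitly that the capacity is defined with key rate $o(1)$ per channel use, or equivalently for a fixed key reused across blocks, and then running the bound per block conditioned on $K$. Conditioned on $K=\kappa$, the message $M$ is a function of $X$. So $I(M;Y_E\mid K)\ge H(M)-(n-r_E)\log q$, and the unconditional leakage differs from this by at most $H(K)/\ell$.
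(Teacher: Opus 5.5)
Your proof is correct, given the vanishing-key-rate assumption you state explicitly, and it differs from the paper's mainly in the achievability half.

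\textbf{Achievability.} The paper constructs no code. It takes $X$ uniform, computes $I(X;Y_B)-I(X;Y_E)=(n-r_E)\log q$, and tacitly invokes the Wyner/Csisz\'{a}r--K\"{o}rner coding theorem. That theorem delivers the rate only asymptotically and in the normalized-leakage sense. You instead give a one-block linear coset code with $k=n-r_E$ that has zero error and exactly zero leakage, which is stronger and explicit. The key in your scheme can even be a fixed public choice, so secrecy comes entirely from the padding $R$; this matches your converse.

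One step needs fixing. From $H_KG_2=0$ and $\rank G_2=n-k$ we get $\Col(G_2)=\Ker(H_K)$, so $\rank(A_EG_2)$ does not depend on $P_K$ at all. What you need is $H_K$ with $\Ker(H_K)\cap\Ker(A_E)=\{0\}$, i.e.\ $[H_K;A_E]$ invertible. Such an $H_K$ always exists because the two kernels have complementary dimensions $r_E$ and $n-r_E$. So no appeal to $P_{\rm generic}$ is needed. The remark about unknown $A_E$ and growing $q$ is not an argument at fixed $q$ and should be dropped.

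\textbf{Converse.} Your converse has the paper's skeleton: Fano plus $H(M\mid Y_E)\le H(X\mid Y_E)\le (n-r_E)\log q$ per block. It also handles a point the paper skips. The paper passes from Fano to $H(M\mid X^N)=o(N)$, which presumes Bob decodes without $K$. But Definition~\ref{def:bob} gives Bob the key, so Fano only yields $H(M\mid X^N,K)=o(N)$. Your bound $I(M;Y_E)\ge I(M;Y_E\mid K)-H(K)$, which uses $M\perp K$, repairs this. Your one-time-pad observation also correctly shows that $H(K)=o(N)$ is a genuine hypothesis of the theorem, not a technicality.

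In the write-up, state that hypothesis up front and go directly to the bound conditioned on $K$. The intermediate detour through $H(X,K\mid Y_E)$ and ``rate per fresh symbol'' can be omitted.
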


\begin{proof}
\textit{Achievability.} Let $X$ be uniform over $\mathbb{F}_q^n$ (achieved by uniform $M$ and $R$). For Bob, $A_B$ is invertible, so $Y_B = A_B X$ is a bijection of $X$, giving $I(X; Y_B) = H(X) = n \log q$. For Eve, $Y_E = A_E X$ is uniform over $\Img(A_E)$ (dimension $r_E$, cardinality $q^{r_E}$), so $H(Y_E) = r_E \log q$. With no noise, $H(Y_E \mid X) = 0$, hence $I(X; Y_E) = r_E \log q$. The achieved secrecy rate is $R_s = n \log q - r_E \log q = (n - r_E) \log q$.

\textit{Converse.} Consider a blocklength-$N$ code with rate $R$, Bob's error probability $P_e^{(N)} \to 0$, and secrecy constraint $I(M; Y_E^N)/N \to 0$. By Fano's inequality, $H(M \mid Y_B^N) \leq 1 + P_e^{(N)} N R \log q = o(N)$. Since $A_B$ is invertible and noiseless, $H(M \mid X^N) = o(N)$, giving $H(M) = I(M; X^N) + o(N) = NR + o(N)$.

For Eve, conditioned on $Y_E^N$, each $X_i$ is constrained to an affine subspace of dimension $n - r_E$. By the chain rule, $H(X^N \mid Y_E^N) \leq N (n - r_E) \log q$. By data processing, $H(M \mid Y_E^N) \leq H(X^N \mid Y_E^N) \leq N (n - r_E) \log q$.

Therefore $I(M; Y_E^N) = H(M) - H(M \mid Y_E^N) \geq NR - N (n - r_E) \log q - o(N)$. Taking limits with $I(M; Y_E^N)/N \to 0$ gives $R \leq (n - r_E) \log q$.
\end{proof}

\begin{theorem}[Contrast with Wyner's Framework]
\label{thm:wyner}
When both Bob's and Eve's channels are noiseless, Wyner's secrecy capacity is zero ($Y_B = Y_E \Rightarrow I(X; Y_B) = I(X; Y_E)$), while the DIM-SEC secrecy capacity is $C_s = (n - r_E) \log q > 0$ whenever $r_E < n$. The difference originates from the fact that Wyner's framework requires a statistical channel advantage, whereas DIM-SEC requires only an algebraic equation count advantage.
\end{theorem}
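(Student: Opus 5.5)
The plan is to prove the two halves of Theorem~\ref{thm:wyner} separately, then explain in one step why they differ.

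\textbf{Wyner's side.} Specialize Wyner's formula $C_s = \max_{P_X}[I(X;Y_B) - I(X;Y_E)]^+$ to the case where both channels are noiseless and identical, so that $Y_B = Y_E$ as random variables for every input distribution $P_X$. Then $I(X;Y_B) = I(X;Y_E)$ for every $P_X$, the bracket is identically zero, and the maximum is $0$. The point to make explicit is what ``Wyner's framework'' means here: the key $K$ (and hence $G_K$) is not treated as a shared resource, and the channel outputs seen by Bob and by the eavesdropper coincide. Under that reading the conclusion is immediate. The same argument also covers the degraded case, where $Y_E$ is a function of $Y_B$ with no noise advantage: there $I(X;Y_E)\le I(X;Y_B)$, but without a statistical advantage the rate collapses whenever the two observation maps agree.

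\textbf{DIM-SEC side.} Invoke Theorem~\ref{thm:capacity} directly. Bob observes $A_B(K)X$ with $A_B$ invertible, and Eve observes $A_E X$ with $\rank(A_E)=r_E<n$. Theorem~\ref{thm:capacity} then gives $C_s=(n-r_E)\log q$. Since $r_E<n$ and $q\ge 2$, this is strictly positive. Corollary~\ref{cor:perfect} supports this constructively: taking $k=n-r_E$ gives $I(M;Y_E)=0$ while Bob decodes perfectly.

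\textbf{Reconciling the two.} The main obstacle is that the statement compares two settings that are not literally the same. If $Y_B=Y_E$ exactly, then Eve sees $A_B X$, which has rank $n$, and DIM-SEC also gives zero. The resolution is to show that the positive rate in DIM-SEC comes from one of two sources. The first is a rank gap $\rank(A_B)=n>r_E=\rank(A_E)$: Eve's map is noiseless but not injective, so $I(X;Y_E)=r_E\log q<n\log q=I(X;Y_B)$. The second is the key $K$ that gives Bob $G_K$. Under the rank-gap reading, Wyner's formula applied to the pair $(A_B,A_E)$ actually yields $(n-r_E)\log q$ as well, which matches the achievability computation in Theorem~\ref{thm:capacity}.

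So I would state precisely which of the following the contrast claims. Either (i) noiselessness plus identical outputs gives zero in Wyner's framework, while DIM-SEC's asymmetry $\Phi_E>\Phi_B=0$ is algebraic (rank) rather than statistical (noise); or (ii) the example is one where the noise-based advantage $I(X;Y_B|\text{noise})-I(X;Y_E|\text{noise})$ vanishes, while the dimension count $n-r_E$ does not. I would present (ii) as the content of the theorem, proved by the two computations above. I would also add a remark that equality of outputs, $Y_B=Y_E$, must be read as ``equal per-dimension channel quality,'' not equal observation matrices.
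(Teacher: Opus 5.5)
Your two computations match what the paper relies on. Identical outputs $Y_B=Y_E$ force $I(X;Y_B)=I(X;Y_E)$ for every $P_X$, and Theorem~\ref{thm:capacity} with $r_E<n$, $q\ge 2$ gives a strictly positive value. The paper offers nothing beyond the parenthetical in the statement and an implicit appeal to Theorem~\ref{thm:capacity}.

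Your observation that the two halves concern different channel pairs is correct, and sharper than the paper. However, you assert the key fact without proof, and it takes two lines. Since $A_B$ is invertible, $Y_E=A_EA_B^{-1}Y_B$, so the DIM-SEC pair is literally a degraded Wyner wiretap channel. Its capacity is therefore $\max_{P_X}[H(X)-H(A_EX)]=\max_{P_X}H(X\mid A_EX)$. Given $A_EX$, $X$ lies in a coset of $\Ker(A_E)$ of size $q^{n-r_E}$. So the maximum is $(n-r_E)\log q$, attained by uniform $X$. In fact, the achievability half of the paper's proof of Theorem~\ref{thm:capacity} is exactly this Wyner computation. Consequently, the first two sentences of the theorem hold only under their own, different, hypotheses. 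The third sentence is interpretation rather than a provable claim, and a contestable one: a rank gap is itself a deterministic channel advantage that Wyner's formula already captures.

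Where your proposal breaks down is the resolution you say you would adopt.
\begin{itemize}
\item Reading (ii) rests on ``$I(X;Y_B\mid\text{noise})$'', which is not a defined quantity.
\item Your closing remark, reading $Y_B=Y_E$ as equal per-dimension quality rather than equal observation matrices, discards the only hypothesis that makes Wyner's side zero. Under that reading, your own computation gives Wyner's capacity $(n-r_E)\log q$, so the first half of the theorem becomes false.
\item The ``degraded case'' sentence has the same flaw. When $Y_E$ is a non-injective function of $Y_B$, the rate does not collapse. ``Whenever the two observation maps agree'' is just the identical case again.
\item The key is not a second source of rate within Theorem~\ref{thm:capacity}. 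Neither its achievability nor its converse uses $K$; the converse treats $M$ as a function of $X^N$ through $H(M\mid Y_E^N)\le H(X^N\mid Y_E^N)$. So in the paper the positive rate comes from the rank gap alone.
\end{itemize}
The coherent write-up is your reading (i). Prove each half under its literal hypotheses. Then say explicitly that they refer to different channel pairs, and that the final sentence is commentary.
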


\subsection{Dimension Security Gain}

\begin{theorem}[Dimension Security Gain]
\label{thm:dsg}
Fixing Eve's effective observation rank $r_E$ and treating the embedding dimension $n$ as a variable: (i)~$C_s(n) = (n - r_E) \log q = \Theta(n)$; (ii)~the solution space size $|\mathcal{S}_E| = q^{n - r_E}$ grows exponentially in $n$; (iii)~when $n - r_E \leq k$, $H(M \mid Y_E) = (n - r_E) \log q$ grows linearly in $n$.
\end{theorem}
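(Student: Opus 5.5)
The plan is to read off all three parts from results already proved, treating $n$ as the only free parameter while $r_E$ and $q$ stay fixed. Throughout, $n$ ranges over values with $n > r_E$, so that Eve's channel is rank-deficient as Definition~\ref{def:eve} requires.

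For part (i), I would invoke Theorem~\ref{thm:capacity} directly: for every admissible $n$ it gives $C_s(n) = (n - r_E)\log q$. This is an affine function of $n$ with positive slope $\log q$. For $n \ge 2r_E$ we have $n - r_E \ge n/2$, so
\[
\tfrac{1}{2} n \log q \le C_s(n) \le n \log q ,
\]
which gives $C_s(n) = \Theta(n)$ with explicit constants. The boundary case $r_E = 0$ is trivial, since then $C_s(n) = n\log q$ exactly.

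For part (ii), I would use Definition~\ref{def:eve}. For any $Y_E$ in the image of $A_E$, the solution set $\mathcal{S}_E(Y_E)$ is a coset of $\Ker(A_E)$. By rank--nullity this kernel has dimension $n - r_E$, so $|\mathcal{S}_E| = q^{n-r_E} = q^{-r_E}\, q^{n}$. This is a constant multiple of $q^n$, hence exponential in $n$ with base $q \ge 2$.

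For part (iii), I would specialise Theorem~\ref{thm:equivocation} to the regime $n - r_E \le k$, which is Case~B in its proof (together with the boundary equality). There $\min(k, n - r_E) = n - r_E$, so under the generic-rank condition $H(M\mid Y_E) = (n - r_E)\log q$. This is affine in $n$ with slope $\log q$.

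The only real obstacle is in part (iii). Theorem~\ref{thm:equivocation} gives the exact expression only when $\rank(B) = \min(m_E, n-k)$ and $A_E$ has full row rank. So the statement should either carry the qualifier ``under generic conditions'' or be stated with probability $P_{\rm generic}$ from \eqref{eq:Pgeneric}. I would add that qualifier explicitly. I would also note that as $n$ grows with $k$ fixed, the condition $n - r_E \le k$ eventually fails, at which point the equivocation saturates at $k\log q$ by Corollary~\ref{cor:perfect}. Linear growth is therefore claimed only on the range $r_E < n \le k + r_E$, or for $k$ growing with $n$, and I would state (iii) on that range.
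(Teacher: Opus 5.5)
Your proposal is correct and follows the same route as the paper. Parts (i) and (iii) are read off from Theorems~\ref{thm:capacity} and~\ref{thm:equivocation} with $r_E$ fixed, and part (ii) comes from rank--nullity applied to $\Ker(A_E)$. Your refinements are legitimate sharpenings that the paper's one-line proof leaves implicit: the explicit $\Theta(n)$ constants, the generic-rank and full-row-rank qualifier in (iii), and the observation that with $k$ fixed the linear growth in (iii) holds only on $r_E < n \le k + r_E$ before saturating at $k \log q$.
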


\begin{proof}
Parts (i) and (iii) follow directly from Theorems~\ref{thm:equivocation} and~\ref{thm:capacity} with fixed $r_E$. For (ii), $|\mathcal{S}_E| = |\Ker(A_E)| = q^{n - \rank(A_E)} = q^{n - r_E}$ by the rank-nullity theorem, which is exponential in $n$.

For Eve's brute-force search, the complexity is $\Omega(q^{n - r_E})$. Grover's quantum search~\cite{Grover96} reduces this to $\Omega(q^{(n - r_E)/2})$, which remains exponential when $n - r_E$ is sufficiently large.
\end{proof}

\subsection{Covertness}

\begin{theorem}[Covertness Bound, Finite Field]
\label{thm:covertness}
Let $N_W$ be uniform over $\mathbb{F}_q^{m_W}$. Willie's optimal detection error probability satisfies
\begin{equation}
P_e^{\rm opt} \ge \frac{1}{2} \left(1 - \sqrt{\frac{q^{2r_W - n} - q^{r_W - n}}{2}}\right).
\label{eq:pebound}
\end{equation}
As $n - r_W \to \infty$, $P_e^{\rm opt} \to 1/2$ (random guessing).
\end{theorem}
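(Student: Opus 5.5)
The plan is to reduce Willie's problem to a total-variation computation and then compare with the right-hand side of \eqref{eq:pebound}. With equal priors on $H_0$ and $H_1$, the Neyman--Pearson/Le Cam identity gives
\[
P_e^{\rm opt} = \frac{1}{2}\bigl(1 - V(P_0, P_1)\bigr),
\]
where $P_0$ and $P_1$ are the laws of $Y_W$ under $H_0$ and $H_1$, and $V$ is total variation. It then suffices to show that $V(P_0,P_1) \le \sqrt{(q^{2r_W-n}-q^{r_W-n})/2}$. For this I would route through Pinsker's inequality $V \le \sqrt{D(P_1\|P_0)/2}$ and then bound $D(P_1\|P_0) \le \chi^2(P_1\|P_0)$. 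The stated radicand has the shape of a $\chi^2$-type quantity, so this is the natural chain to aim for.

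The key step is identifying $P_1$. Under $H_1$ we have $Y_W = A_W X + N_W$, where $N_W$ is uniform on $\mathbb{F}_q^{m_W}$ and independent of $X$. The marginal of $X$ is uniform on $\mathbb{F}_q^n$, as established after \eqref{eq:encoding}, but that fact is not even needed here. For any fixed value $a$ of $A_W X$, the shift $a + N_W$ is uniform on $\mathbb{F}_q^{m_W}$, so averaging over $X$ gives $P_1 = P_0 = \text{Unif}(\mathbb{F}_q^{m_W})$. Consequently $D(P_1\|P_0)=\chi^2(P_1\|P_0)=0$, $V=0$, and $P_e^{\rm opt}=1/2$.

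It remains to compare with the bound. The radicand $q^{2r_W-n}-q^{r_W-n}=q^{r_W-n}(q^{r_W}-1)$ is nonnegative because $r_W\ge 0$. Hence the right-hand side of \eqref{eq:pebound} is at most $1/2$; if the square root exceeds $1$, the right-hand side is negative and the inequality is trivial. In every case $P_e^{\rm opt}=1/2$ dominates it. The limiting statement as $n-r_W\to\infty$ follows either directly from $P_e^{\rm opt}=1/2$ or from the radicand tending to $0$ whenever $r_W$ grows slower than $n-r_W$.

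The main obstacle I expect is making the bound meaningful rather than vacuous in the regime the authors likely intend. A plausible reading is a $\chi^2$ computation with Willie's effective signal confined to a coset structure of dimension $r_W$ inside an $n$-dimensional space, for example when the noise is not full-support or Willie conditions on partial knowledge. To cover that regime, I would compute
\[
\chi^2(P_1\|P_0) = q^{m_W}\sum_y P_1(y)^2 - 1
\]
via a collision-probability argument, counting pairs $(x,x')$ with $A_W x = A_W x'$ through $|\Ker(A_W)| = q^{n-r_W}$ by rank--nullity, as in the proof of Theorem~\ref{thm:dsg}. I would then feed this $\chi^2$ value into Pinsker, checking that it yields exactly $q^{2r_W-n}-q^{r_W-n}$. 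If it does not, the full-support-noise argument above still establishes the theorem as stated.
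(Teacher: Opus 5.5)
Your argument is correct, and it takes a genuinely different and sharper route than the paper.

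The paper's sketch keeps $P_1$ as a nontrivial convolution of the uniform law on $\Img(A_W)$ with $N_W$. It asserts that the Hellinger distance between $P_0$ and $P_1$ is ``controlled by $q^{r_W-n}$,'' then passes to total variation and the Neyman--Pearson error. Nowhere does it compute anything that produces the radicand $q^{2r_W-n}-q^{r_W-n}$.

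You instead note that convolving any law with the uniform law on the whole group $\mathbb{F}_q^{m_W}$ returns the uniform law. So, with $N_W$ independent of $X$ (which you rightly make explicit), $P_1=P_0$, $V=0$ and $P_e^{\rm opt}=1/2$ exactly. The inequality then follows from $q^{2r_W-n}-q^{r_W-n}=q^{r_W-n}(q^{r_W}-1)\ge 0$.

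What your route buys:
\begin{itemize}
\item It is fully rigorous and gives the exact value rather than a bound.
\item It yields the limiting claim unconditionally. This matters, because the stated right-hand side tends to $1/2$ only when $2r_W-n\to-\infty$. For $r_W=n/2$ it approaches $\frac{1}{2}(1-1/\sqrt{2})$ while $n-r_W\to\infty$. So the limit cannot be read off from the bound as the paper suggests.
\item It shows that in this noise model $P_e^{\rm opt}$ does not depend on $r_W$ or $n$ at all. The $q^{r_W-n}$ dependence in the paper's Hellinger step is therefore not real, and the bound is loose; the paper's approach could only carry content under a different noise model.
\end{itemize}

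Your final paragraph is unnecessary and can be dropped. In any case, the natural collision count there gives $\Pr[A_W X=A_W X']=q^{-r_W}$ for uniform $X$, which does not reproduce the stated radicand.
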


\begin{proof}[Proof Sketch]
Under $H_0$, $Y_W$ is uniform over $\mathbb{F}_q^{m_W}$. Under $H_1$, $Y_W = A_W X + N_W$. Since $X$ is uniform over $\mathbb{F}_q^{n}$ (from the encoding), $A_W X$ is uniform over $\Img(A_W)$ (dimension $r_W$, size $q^{r_W}$). The distribution of $Y_W$ is the convolution of $A_W X$ and $N_W$. The Hellinger distance $H(P_0, P_1)$ is controlled by $q^{r_W - n}$. The bound follows from the relationship between the Hellinger distance, total variation distance, and the Neyman-Pearson lemma.
\end{proof}

\begin{theorem}[Covertness Bound, Continuous Domain]
\label{thm:covertness_cont}
In the continuous Gaussian model, let $X \sim \mathcal{N}(0, \frac{P}{N} I_N)$ and $N_W \sim \mathcal{N}(0, \sigma^2 I_{m_W})$, with $A_W$ having orthonormal rows and $\rank(A_W) = r_W$. The 2-Wasserstein distance between $H_0$ and $H_1$ distributions is
\begin{equation}
W_2^2(P_0, P_1) = r_W \cdot \sigma^2 \left( \sqrt{1 + \mathrm{SNR}_0} - 1 \right)^2,
\label{eq:W2}
\end{equation}
where $\mathrm{SNR}_0 = P / (N\sigma^2)$. At low SNR ($\mathrm{SNR}_0 \ll 1$),
\begin{equation}
W_2 \approx \frac{\sqrt{r_W} \cdot P}{2 N \sigma}.
\label{eq:W2approx}
\end{equation}
The covertness condition $W_2 \to 0$ is equivalent to $\sqrt{r_W} \cdot P / N \to 0$.
\end{theorem}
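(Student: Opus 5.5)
Both hypotheses give zero-mean Gaussian laws, so the plan is to reduce everything to the closed-form $2$-Wasserstein distance between two centred Gaussians with commuting covariances. Under $H_0$ we have $P_0=\mathcal{N}(0,\sigma^2 I_{m_W})$. Under $H_1$, $Y_W=A_W X+N_W$ with $X\sim\mathcal{N}(0,\frac{P}{N}I_N)$ independent of $N_W$, so $P_1=\mathcal{N}(0,\Sigma_1)$ with
\[
\Sigma_1=\frac{P}{N}A_WA_W^{\top}+\sigma^2 I_{m_W}.
\]
Since $A_W$ has orthonormal rows, $A_WA_W^{\top}$ is the identity on the row space of the observation. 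Taking $m_W=r_W$, which is exactly what orthonormality of the rows forces, we get $\Sigma_1=\sigma^2(1+\mathrm{SNR}_0)I_{r_W}$. If one allows redundant zero rows with $m_W>r_W$, then $A_WA_W^{\top}$ is a rank-$r_W$ projection, and only its $r_W$ nonzero eigen-directions contribute.

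Next we invoke the Gelbrich/Olkin--Pukelsheim formula for centred Gaussians,
\[
W_2^2=\tr\!\left(\Sigma_0+\Sigma_1-2\bigl(\Sigma_0^{1/2}\Sigma_1\Sigma_0^{1/2}\bigr)^{1/2}\right).
\]
Because $\Sigma_0=\sigma^2 I$ commutes with $\Sigma_1$, this collapses to $\|\Sigma_1^{1/2}-\Sigma_0^{1/2}\|_F^2$. The matrix $\Sigma_1^{1/2}-\Sigma_0^{1/2}$ has exactly $r_W$ nonzero eigenvalues, each equal to $\sigma\sqrt{1+\mathrm{SNR}_0}-\sigma$. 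Squaring and summing gives $r_W\sigma^2(\sqrt{1+\mathrm{SNR}_0}-1)^2$, which is \eqref{eq:W2}. The coupling $Y_1=\Sigma_1^{1/2}\Sigma_0^{-1/2}Y_0$, a scaling along the range of $A_W$, can be exhibited directly to certify optimality if one prefers to avoid citing the general formula.

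For the low-SNR regime we Taylor-expand $\sqrt{1+x}-1=x/2+O(x^2)$ with $x=\mathrm{SNR}_0=P/(N\sigma^2)$. This gives
\[
W_2=\sqrt{r_W}\,\sigma\Bigl(\tfrac{P}{2N\sigma^2}+O(\mathrm{SNR}_0^2)\Bigr)=\frac{\sqrt{r_W}\,P}{2N\sigma}\bigl(1+O(\mathrm{SNR}_0)\bigr),
\]
which is \eqref{eq:W2approx}.

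For the equivalence claim, note that the map $x\mapsto\sqrt{1+x}-1$ is continuous, strictly increasing, and vanishes only at $0$. Moreover, $\sqrt{1+x}-1\asymp x$ for bounded $x$, while $\sqrt{1+x}-1\ge c>0$ for $x\ge 1$. Hence, with $\sigma$ fixed, $W_2\to0$ if and only if $\sqrt{r_W}\,\sigma(\sqrt{1+\mathrm{SNR}_0}-1)\to0$. This forces $\mathrm{SNR}_0\to0$, since otherwise $W_2\ge\sigma c$ because $r_W\ge1$. Once $\mathrm{SNR}_0\to0$, the two-sided comparison shows the condition is equivalent to $\sqrt{r_W}P/N\to0$.

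The main obstacle is careful bookkeeping rather than any deep step: getting the dimensions right when $m_W$ and $r_W$ may differ, justifying the commuting-covariance simplification, and making the "equivalent" statement precise by using two-sided bounds instead of the first-order approximation alone. I would handle the dimensions first by explicitly diagonalising $A_WA_W^{\top}$.
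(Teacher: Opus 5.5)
Your proposal is correct and follows the paper's proof: the closed-form $W_2$ formula for centred Gaussians, the spectrum of $A_WA_W^{\top}$ ($r_W$ ones and $m_W-r_W$ zeros), and a first-order Taylor expansion. Your commuting-covariance form $\|\Sigma_1^{1/2}-\Sigma_0^{1/2}\|_F^2$ is just a compact version of the paper's trace bookkeeping.

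Your two-sided bounds on $\sqrt{1+x}-1$ add something the paper lacks: they justify the stated equivalence $W_2\to 0 \iff \sqrt{r_W}P/N\to 0$, which the paper only asserts. One caveat on your aside: exhibiting the coupling $Y_1=\Sigma_1^{1/2}\Sigma_0^{-1/2}Y_0$ only gives an upper bound on $W_2$. To avoid citing the formula you would also need a matching lower bound, e.g.\ coordinatewise Cauchy--Schwarz in the common eigenbasis.
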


\begin{proof}
For two zero-mean Gaussian distributions, the $W_2$ distance has the closed form~\cite{Takatsu11} $W_2^2 = \tr(\Sigma_0) + \tr(\Sigma_1) - 2 \tr((\Sigma_0^{1/2} \Sigma_1 \Sigma_0^{1/2})^{1/2})$. The theory of optimal transport and Wasserstein geometry is developed in~\cite{Villani03,Villani09} with computational aspects covered in~\cite{Peyre19}. Here $\Sigma_0 = \sigma^2 I_{m_W}$ and $\Sigma_1 = \frac{P}{N} A_W A_W^T + \sigma^2 I_{m_W}$. The matrix $A_W A_W^T$ has $r_W$ eigenvalues equal to $1$ and $m_W - r_W$ eigenvalues equal to $0$ (since $A_W$ has orthonormal rows). Computing the trace terms yields $\tr(\Sigma_0) = m_W \sigma^2$, $\tr(\Sigma_1) = r_W P/N + m_W \sigma^2$, and $2 \tr(\cdots) = 2 r_W \sigma \sqrt{P/N + \sigma^2} + 2(m_W - r_W) \sigma^2$. Combining and simplifying using $\mathrm{SNR}_0 = P/(N \sigma^2)$ gives \eqref{eq:W2}. The low-SNR approximation follows from the first-order Taylor expansion $\sqrt{1 + \mathrm{SNR}_0} \approx 1 + \mathrm{SNR}_0/2$.
\end{proof}

\begin{remark}
The classical square-root law~\cite{Bash13} requires $P \propto 1/\sqrt{n}$ for $n$ channel uses. Theorem~\ref{thm:covertness_cont} shows that when the embedding dimension $N$ is an independent design variable, the covertness condition becomes $P \ll N / \sqrt{r_W}$: the transmit power need not decay as $N$ grows. Dimension resources replace power reduction.
\end{remark}

\subsection{Unification of Secrecy and Covertness}

\begin{theorem}[Secrecy-Covertness Relationship]
\label{thm:unification}
For the same system, let $\eta_s = H(M \mid Y_E)/H(M)$ (normalized secrecy) and $\eta_c = 2 P_e^{\rm opt}$ (normalized covertness, $\eta_c = 1$ for perfect covertness). Then:
\begin{enumerate}
\item $\eta_s$ is non-decreasing in $\Phi_E = (n - r_E)/n$;
\item $\eta_c$ is non-decreasing in $\Phi_W = (n - r_W)/n$;
\item When $r_E = r_W = r$, both $\eta_s$ and $\eta_c$ are non-decreasing in $\Phi = (n - r)/n$, with a measured Pearson correlation of $0.997$ in continuous-domain experiments.
\end{enumerate}
\end{theorem}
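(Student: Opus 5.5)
The plan is to express each normalized metric as an explicit function of the relevant rank deficit, using Theorem~\ref{thm:equivocation} for secrecy and Theorem~\ref{thm:covertness} for covertness. Monotonicity then follows from elementary properties of $\min$ and of the exponential. Part~3 combines the two monotonicity statements; the correlation figure is an empirical statement that the proof will only cite from Section~\ref{sec:experiments}.

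For part~1, I work under the generic conditions of Theorem~\ref{thm:equivocation}. There $H(M\mid Y_E)=\min(k,n-r_E)\log q$ and $H(M)=k\log q$. Substituting $n-r_E=n\Phi_E$ gives
\[
\eta_s=\min\!\left(1,\frac{n\Phi_E}{k}\right),
\]
which is non-decreasing in $\Phi_E$ for fixed $n,k$. I will state explicitly that $n$ and $k$ are held fixed, so that $\Phi_E$ varies only through $r_E$. For the non-generic case I will instead use the exact expression $\min(k,\rank(B))\log q$ with $B=A_EG_2$. Deleting a row of $A_E$ (decreasing $r_E$) cannot increase $\rank(B)$ in the naive sense. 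Because of this, the clean non-generic argument should use $|\mathcal M(y_E)|=q^{k+b-m_E}$ from the proof of Theorem~\ref{thm:equivocation}. Removing one independent row of $A_E$ lowers $m_E$ by one and $b$ by at most one, so $k+b-m_E$ does not decrease. This gives monotonicity without any genericity assumption.

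For part~2, I apply Theorem~\ref{thm:covertness}. The lower bound on $P_e^{\rm opt}$ depends on $r_W$ only through
\[
f=q^{2r_W-n}-q^{r_W-n}=q^{-n}\,q^{r_W}(q^{r_W}-1),
\]
which is increasing in $r_W$. Hence $1-\sqrt{f/2}$, and with it the bound on $\eta_c$, is non-decreasing in $\Phi_W=1-r_W/n$. This is the main obstacle: Theorem~\ref{thm:covertness} gives only a lower bound, so monotonicity of the bound does not by itself give monotonicity of $\eta_c$. I would address this directly, without the Hellinger route. The optimal error is $P_e^{\rm opt}=\tfrac12(1-\mathrm{TV}(P_0,P_1))$. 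If $A_W'$ is $A_W$ with some rows removed or replaced by zeros, then $Y_W'$ is obtained from $Y_W$ by a channel that is the same under both hypotheses, namely projecting and resampling the affected coordinates with independent uniform noise. The data-processing inequality for total variation then gives $\mathrm{TV}'\le \mathrm{TV}$, so $P_e^{\rm opt}$ is non-decreasing as the observation rank shrinks. As a fallback, I could present part~2 as a statement about the guaranteed covertness level, meaning the bound in \eqref{eq:pebound}. In the continuous model, the analogous fact follows from \eqref{eq:W2}, which is increasing in $r_W$.

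For part~3, setting $r_E=r_W=r$ gives $\Phi_E=\Phi_W=\Phi$, and parts~1 and~2 apply to the same variable. Both $\eta_s$ and $\eta_c$ are therefore non-decreasing functions of $\Phi$, so they are comonotone. The reported Pearson coefficient of $0.997$ is an experimental observation confirming this comonotonicity, and I will refer to it rather than prove it.
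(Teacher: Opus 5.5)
The paper gives no proof of this theorem. It is stated after Theorem~\ref{thm:covertness_cont} and rests implicitly on Theorems~\ref{thm:equivocation} and~\ref{thm:covertness} plus the measured correlation. Your generic computation $\eta_s=\min(1,n\Phi_E/k)$ and the monotonicity of \eqref{eq:pebound} in $r_W$ are exactly that implicit argument. You are also right that a monotone lower bound says nothing about $P_e^{\rm opt}$ itself. Your data-processing repair is valid and goes beyond the paper: $P_e^{\rm opt}=\tfrac12(1-\mathrm{TV})$, and the coarser observation is obtained from the finer one through a hypothesis-independent channel. Two points need tightening.

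First, your non-generic argument does not give monotonicity ``without any genericity assumption'' in the sense of the statement. It only compares nested observation matrices under a fixed key, and outside the generic regime $\eta_s$ is not a function of $\Phi_E$ at all. If $A_E'$ is the first row of $H_K$, then $Y_E'=M_1$ (since $H_KG_2=0$) and $\eta_s=(k-1)/k$ at rank $1$. A generic rank-$2$ $A_E$ with $n-k\ge 2$ gives $\eta_s=1$ at a smaller $\Phi_E$. So state part~1 either under genericity or as monotonicity under refinement of Eve's observation.

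Also discard $\min(k,\rank(B))\log q$ as ``the exact expression'': it contradicts Case~A of the paper's own proof, since for $r_E<k\le n-r_E$ it gives $r_E$, not $k$. The exact value is $H(M\mid Y_E)=(k+b-r_E)\log q$, i.e.\ $I(M;Y_E)=(r_E-b)\log q$. Use $r_E$ rather than $m_E$ unless $A_E$ has full row rank. Simpler still, $I(M;Y_E')\le I(M;Y_E)$ by data processing whenever $Y_E'$ is a function of $Y_E$. This mirrors your part~2 and also covers the noisy Gaussian model in which the $0.997$ figure was measured.

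Second, in the finite-field covertness model $A_WX+N_W$ is already uniform on $\mathbb{F}_q^{m_W}$, because $N_W$ is uniform and independent of $X$. Hence $P_0=P_1$, $P_e^{\rm opt}=\tfrac12$ and $\eta_c\equiv 1$. Part~2 is therefore trivial there, and \eqref{eq:pebound} carries no information about $r_W$.

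The substantive case is Gaussian. With orthonormal rows, $Y_W$ consists of $r_W$ i.i.d.\ coordinates, $\mathcal{N}(0,\sigma^2+P/N)$ versus $\mathcal{N}(0,\sigma^2)$. So $P_e^{\rm opt}$ depends only on $r_W$, and your coordinate-dropping argument covers every pair of ranks. Your fallback via \eqref{eq:W2} does not substitute for this. At fixed $\mathrm{SNR}_0$, $W_2$ scales with $\sigma$ while total variation does not, so monotonicity of $W_2$ in $r_W$ implies nothing about $\eta_c$.
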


Theorem~\ref{thm:unification} is the central conceptual contribution of this work: the same algebraic parameter $\Phi$ simultaneously governs two properties that have been treated as independent in the prior literature.

\subsection{Continuous-Domain Extensions}

\begin{lemma}[Continuous Equivocation Bound]
\label{lem:cont_equiv}
Under the Gaussian encoding $X = Q_K S$ with $S = [M;R]/\sqrt{\gamma}$, $M \sim \mathcal{N}(0, P_M I_k)$, $R \sim \mathcal{N}(0, P_R I_{n-k})$,
\begin{equation}
h(M \mid Y_E) \ge h(M) - \frac{1}{2} \log \det\left(I + \frac{P_M}{\gamma} H_{\rm eff}^T \Sigma_{RN}^{-1} H_{\rm eff}\right),
\label{eq:cont_equiv}
\end{equation}
where $H_{\rm eff} = H_E Q_1$ and $\Sigma_{RN} = \sigma^2 I + (P_R/\gamma) H_E Q_2 Q_2^T H_E^T$.
\end{lemma}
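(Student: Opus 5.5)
The plan is to write the continuous-domain analogue of the proof of Theorem~\ref{thm:equivocation}, via the identity $h(M \mid Y_E) = h(M) - I(M;Y_E)$ and an upper bound on the leakage $I(M;Y_E)$. Partition the orthogonal key matrix as $Q_K = [Q_1 \mid Q_2]$ with $Q_1 \in \mathbb{R}^{n\times k}$. Eve's observation $Y_E = H_E X + N_E$ with $N_E \sim \mathcal{N}(0,\sigma^2 I)$ then reads
\begin{equation*}
Y_E = \frac{1}{\sqrt{\gamma}} H_{\rm eff} M + \frac{1}{\sqrt{\gamma}} H_E Q_2 R + N_E .
\end{equation*}
The second and third terms form an effective noise $Z$. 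It is independent of $M$, since $R$, $M$ and $N_E$ are mutually independent, and it is Gaussian with covariance $\Sigma_{RN}$. This structure mirrors the decomposition $Y_E = A_E G_1 M + BR$ used in the finite-field proof.

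I would first treat the conditionally-on-$K$ case, where Eve knows $Q_K$, because this gives the weakest possible equivocation. There $Y_E$ is a linear Gaussian vector channel from $M$ with Gaussian input and independent Gaussian noise. Then
\begin{equation*}
I(M;Y_E) = h(Y_E) - h(Z) = \tfrac12 \log\det\!\left(\Sigma_{RN} + \tfrac{P_M}{\gamma} H_{\rm eff} H_{\rm eff}^T\right) - \tfrac12\log\det \Sigma_{RN}.
\end{equation*}
Factoring out $\Sigma_{RN}$ gives $\tfrac12\log\det(I + \tfrac{P_M}{\gamma}\Sigma_{RN}^{-1} H_{\rm eff}H_{\rm eff}^T)$. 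Sylvester's determinant identity $\det(I+AB)=\det(I+BA)$ converts this into the $k\times k$ form $\tfrac12\log\det(I + \tfrac{P_M}{\gamma} H_{\rm eff}^T \Sigma_{RN}^{-1} H_{\rm eff})$. Substituting into $h(M\mid Y_E) = h(M) - I(M;Y_E)$ yields \eqref{eq:cont_equiv} with equality in this genie-aided case.

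Next, when Eve does not know $K$, I would show the bound still holds as an inequality. Treat $K$ as a random variable independent of $(M,R,N_E)$. Then
\begin{equation*}
I(M;Y_E) \le I(M;Y_E,K) = I(M;Y_E\mid K),
\end{equation*}
where the equality uses $M \perp K$. Hence $h(M\mid Y_E) \ge h(M\mid Y_E,K)$, which is the conditional computation above. If $Q_K$ is random, the right-hand side of \eqref{eq:cont_equiv} should be read with an expectation over $K$, or for a fixed realization. I would make this explicit by stating the lemma per key realization.

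The main obstacle is invertibility of $\Sigma_{RN}$ and the role of $\sigma^2$. For $\sigma^2>0$ the matrix $\Sigma_{RN}$ is positive definite, so every step is justified. Without noise, $\Sigma_{RN}$ can be singular when $H_E Q_2$ is not of full row rank. I would handle that case by adding $\sigma^2 I$ and letting $\sigma \to 0$, without pushing further, since the lemma is stated with $\sigma^2 I$ included. The other steps are routine Gaussian mutual-information identities.
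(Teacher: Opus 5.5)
Your proposal is correct and is essentially the paper's argument. Both isolate the Gaussian effective noise $\frac{1}{\sqrt{\gamma}}H_E Q_2 R + N_E$ with covariance $\Sigma_{RN}$, compute $I(M;Y_E)=h(Y_E)-h(Y_E\mid M)$ as a log-ratio of determinants, and reduce it to the $k\times k$ form with the same determinant identity (the paper calls it the ``matrix determinant lemma'', you use Sylvester's identity). Your extra step $I(M;Y_E)\le I(M;Y_E\mid K)$, for a key independent of $(M,R,N_E)$, is a worthwhile addition: the paper's computation implicitly fixes $Q_K$ (otherwise $Y_E$ is a Gaussian mixture, not Gaussian) and so actually yields equality, leaving the ``$\ge$'' in the statement unexplained, whereas your argument supplies it with the right-hand side averaged over the key.
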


\begin{proof}
$h(M \mid Y_E) = h(M) - I(M; Y_E)$. For Gaussian distributions, $I(M; Y_E) = h(Y_E) - h(Y_E \mid M)$. Given $M = m$, $Y_E \sim \mathcal{N}(H_{\rm eff} \, m/\sqrt{\gamma}, \Sigma_{RN})$, so $h(Y_E \mid M) = \frac{1}{2} \log \det(2\pi e \Sigma_{RN})$. Unconditionally, $Y_E \sim \mathcal{N}(0, \Sigma_Y)$ with $\Sigma_Y = (P_M/\gamma) H_{\rm eff} H_{\rm eff}^T + \Sigma_{RN}$. Hence $I(M; Y_E) = \frac{1}{2} \log \det(\Sigma_Y) / \det(\Sigma_{RN})$. Applying the matrix determinant lemma $\det(A + UV^T) = \det(A) \det(I + V^T A^{-1} U)$ yields \eqref{eq:cont_equiv}.
\end{proof}

\begin{lemma}[High-SNR Secrecy Capacity]
\label{lem:hsnr}
As $P/\sigma^2 \to \infty$,
\begin{equation}
C_s = \frac{n - m_E}{2} \log\left(1 + \frac{P}{\sigma^2}\right) + O(1).
\label{eq:hsnr_cap}
\end{equation}
\end{lemma}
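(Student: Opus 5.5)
The plan is to treat this as a Gaussian MIMO wiretap problem in which Bob's channel is effectively full rank $n$ and Eve observes through $H_E \in \mathbb{R}^{m_E \times n}$ with $m_E < n$. The secrecy capacity is then the standard MIMO wiretap expression
\[
C_s = \max_{\Sigma_X \succeq 0,\ \tr(\Sigma_X)\le P} \tfrac12\log\det\!\Big(I_n + \tfrac{1}{\sigma^2}H_B\Sigma_X H_B^T\Big) - \tfrac12\log\det\!\Big(I_{m_E} + \tfrac{1}{\sigma^2}H_E\Sigma_X H_E^T\Big),
\]
which we take from Khisti--Wornell~\cite{Khisti10a,Khisti10b} (Oggier--Hassibi~\cite{Oggier11}). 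We prove matching lower and upper bounds that agree up to $O(1)$, mirroring the achievability/converse split of Theorem~\ref{thm:capacity}. The pre-log $(n-m_E)/2$ is then exactly the SDoF $n - m_E$ of a full-row-rank $H_E$, consistent with the identification $r_E = m_E$ used in Theorem~\ref{thm:equivocation}.

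\textbf{Achievability.} Choose an orthonormal basis $V_0 \in \mathbb{R}^{n\times(n-m_E)}$ of $\Ker(H_E)$ (dimension $n-m_E$ when $H_E$ has full row rank). Transmit $X = V_0 U$ with $U \sim \mathcal{N}(0, \frac{P}{n-m_E} I_{n-m_E})$. Then $H_E X = 0$, so $I(X;Y_E)=0$. Bob, with invertible $H_B$ (taking $H_B = I$ after the normalization of Definition~\ref{def:bob}), obtains
\[
\tfrac12\log\det\!\Big(I + \tfrac{P}{(n-m_E)\sigma^2}V_0^T H_B^T H_B V_0\Big) = \tfrac{n-m_E}{2}\log\!\Big(1+\tfrac{P}{\sigma^2}\Big) + O(1).
\]
Here the $O(1)$ absorbs $\log(n-m_E)$ and the log-eigenvalues of $V_0^TH_B^TH_BV_0$, which are fixed and positive. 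This is the same null-space idea as Lemma~\ref{lem:cont_equiv} with $Q_1$ spanning $\Ker(H_E)$: there $H_{\rm eff}=0$, so the log-det penalty vanishes.

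\textbf{Converse.} We use the degraded-channel bound $C_s \le \max_{\Sigma_X} I(X;Y_B \mid Y_E)$ with an enhanced Bob who sees both $Y_B$ and $Y_E$. Then
\[
I(X;Y_B\mid Y_E) = \tfrac12\log\det\!\Big(I+\tfrac{1}{\sigma^2}\tilde H\Sigma_X\tilde H^T\Big) - \tfrac12\log\det\!\Big(I+\tfrac{1}{\sigma^2}H_E\Sigma_X H_E^T\Big),
\]
where $\tilde H = [H_B;H_E]$. Decompose $\mathbb{R}^n = \Ker(H_E)^\perp \oplus \Ker(H_E)$. On the $m_E$-dimensional row space of $H_E$, the gain from the first term is cancelled by the second term up to a bounded amount, because $H_E$ restricted to that subspace is invertible with fixed singular values. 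This should give a bound of the form $\log\det(I+aS)-\log\det(I+bS) \le \mathrm{const}$ for comparable PSD forms. On the null space, the first term contributes at most $\frac{n-m_E}{2}\log(1+cP/\sigma^2)$ by Hadamard's inequality and the trace constraint.

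The main obstacle is the cross terms between the two subspaces: $\Sigma_X$ need not be block diagonal. We would handle this with a Schur-complement/chain-rule step, writing $X = (X_\parallel, X_\perp)$ and bounding
\[
I(X;Y_B\mid Y_E) \le I(X_\perp; Y_B \mid Y_E, X_\parallel) + I(X_\parallel; Y_B \mid Y_E).
\]
The first term is at most $\frac{n-m_E}{2}\log(1+cP/\sigma^2)$. The second term is $O(1)$, since given $Y_E$ the component $X_\parallel$ is known up to noise of variance $\sim\sigma^2\|H_E^{+}\|^2$, and Bob's extra look at it adds only a bounded SNR ratio. Combining the two bounds with the achievability result gives \eqref{eq:hsnr_cap}.
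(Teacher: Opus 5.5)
The paper states Lemma~\ref{lem:hsnr} without proof, so I am judging your argument on its own merits. Your achievability part is correct: you use null-space beamforming, which gives $I(X;Y_E)=0$, and the $O(1)$ term comes from the fixed positive eigenvalues of $V_0^TH_B^TH_BV_0$.

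\textbf{The gap.} The converse fails exactly at the cross terms you flagged. The claim $I(X_\parallel;Y_B\mid Y_E)=O(1)$ is false. The transmitter can place an amplified copy of $X_\parallel$ inside $\Ker(H_E)$, which gives Bob a high-SNR view of $X_\parallel$ that Eve never sees. Your heuristic that Bob's extra look adds only a bounded SNR ratio accounts for the direct term $H_BX_\parallel$, but not for the view through the correlated component $H_BX_\perp$.

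Concretely, take $n=2$, $m_E=1$, $H_E=[1\ \ 0]$ and $H_B=I_2$. Let $X_1\sim\mathcal{N}(0,\sigma^2)$ and $X_2=\alpha X_1$ with $\alpha^2=P/\sigma^2-1$. This is a legitimate rank-one Gaussian input with $\tr(\Sigma_X)=P$. Then $(Y_B,Y_E)$ gives three looks at $X_1$ with gains $1,\alpha,1$, so
\[
I(X_1;Y_B\mid Y_E)=\tfrac12\log\left(2+\tfrac{P}{\sigma^2}\right)-\tfrac12\log 2=\tfrac12\log\left(1+\tfrac{P}{2\sigma^2}\right)\to\infty ,
\]
while $I(X_2;Y_B\mid Y_E,X_1)=0$. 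The lemma itself survives, because the sum is still $\frac{n-m_E}{2}\log(P/\sigma^2)+O(1)$. But the entire degree of freedom has moved into the term you bounded by $O(1)$, so your two separate bounds do not prove the converse.

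\textbf{A repair.} The cleanest fix is to apply your Schur-complement idea to the observations rather than to the input. For any matrix $T$, $h(Y_B\mid Y_E)=h(Y_B-TY_E\mid Y_E)\le h(Y_B-TY_E)$. Choose $T=H_BH_E^T(H_EH_E^T)^{-1}$ and let $\Pi=I_n-H_E^T(H_EH_E^T)^{-1}H_E$ be the projector onto $\Ker(H_E)$. Then $Y_B-TY_E=H_B\Pi X+N_B-TN_E$.

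The signal covariance $H_B\Pi\Sigma_X\Pi H_B^T$ has rank at most $n-m_E$ and trace at most $\|H_B\|^2P$, and $I_n+TT^T\succeq I_n$. The Gaussian max-entropy bound followed by AM--GM on the nonzero eigenvalues then gives
\[
I(X;Y_B\mid Y_E)\le \tfrac12\log\det(I_n+TT^T)+\tfrac{n-m_E}{2}\log\left(1+\tfrac{\|H_B\|^2P}{(n-m_E)\sigma^2}\right).
\]
This absorbs every cross-covariance automatically. It also holds for arbitrary inputs, so you no longer need the unstated step that Gaussian inputs maximize $I(X;Y_B\mid Y_E)$.

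Alternatively, you can reverse your chain rule and condition on the null-space component: $I(X;Y_B\mid Y_E)=I(X_\perp;Y_B\mid Y_E)+I(X_\parallel;Y_B\mid Y_E,X_\perp)$. The second term is then genuinely $O(1)$, because both receivers see $X_\parallel$ through fixed full-rank maps. The first term, however, now needs its own argument, since information about $X_\perp$ can ride on $X_\parallel$.
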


\begin{theorem}[EAD-SDoF Equivalence]
\label{thm:sdoF}
The secure degrees of freedom satisfy $d_s = \lim_{\mathrm{SNR} \to \infty} C_s / (\frac{1}{2} \log \mathrm{SNR}) = n - m_E = n \cdot \Phi$.
\end{theorem}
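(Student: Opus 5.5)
The plan is to obtain Theorem~\ref{thm:sdoF} directly from Lemma~\ref{lem:hsnr}, dividing the high-SNR expansion by the pre-log factor, and then to translate the resulting integer into EAD language through Definition~\ref{def:ead}. Lemma~\ref{lem:hsnr} gives
\[
C_s = \tfrac{n-m_E}{2}\log(1+P/\sigma^2) + O(1).
\]
With $\mathrm{SNR} = P/\sigma^2$, I will write $\log(1+\mathrm{SNR}) = \log \mathrm{SNR} + \log(1+1/\mathrm{SNR})$. The second term tends to $0$, so
\[
C_s = \tfrac{n-m_E}{2}\log\mathrm{SNR} + O(1).
\]
Dividing by $\tfrac12\log\mathrm{SNR}$ and letting $\mathrm{SNR}\to\infty$ removes the $O(1)$ term, since $O(1)/\log\mathrm{SNR}\to 0$. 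The limit therefore exists and equals $n-m_E$. This already proves the first equality.

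For the identification $n-m_E = n\Phi$, I will invoke Definition~\ref{def:ead} with $r = r_E$, which gives $n\Phi_E = n - r_E$. It then remains to show $r_E = m_E$. In the continuous model the matrix $H_E\in\mathbb{R}^{m_E\times n}$ with $m_E<n$ has full row rank generically, exactly as in the full-row-rank assumption used in the proof of Theorem~\ref{thm:equivocation}. I will state this explicitly as the standing assumption: $H_E$ has full row rank, which holds with probability one for absolutely continuous channel draws. Under it $r_E = m_E$, and so $d_s = n - r_E = n\Phi$, which matches the finite-field capacity $(n-r_E)\log q$ of Theorem~\ref{thm:capacity}. In the rank-deficient case the same argument goes through with $m_E$ replaced by $r_E$. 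The reason is that one can left-multiply by an orthogonal change of basis and discard $m_E - r_E$ rows that are pure noise; this does not change $I(X;Y_E)$ beyond an $O(1)$ term.

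The main obstacle is not the limit computation but the reliance on Lemma~\ref{lem:hsnr}, whose $O(1)$ remainder must be uniform. In particular, the constant must not depend on SNR through the geometry of $H_E$. If I needed to justify this directly, I would bound $C_s$ from both sides.

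\emph{Achievability.} Take Gaussian inputs $X\sim\mathcal{N}(0,\tfrac{P}{n}I)$ and an invertible $H_B$. Then
\[
I(X;Y_B)=\tfrac n2\log\mathrm{SNR}+O(1), \qquad I(X;Y_E)\le \tfrac{m_E}{2}\log\mathrm{SNR}+O(1),
\]
where the second bound uses $\log\det(I+\tfrac{P}{n\sigma^2}H_EH_E^T)$ and the fact that $H_EH_E^T$ has $m_E$ fixed positive eigenvalues.

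\emph{Converse.} Use $I(M;Y_B)-I(M;Y_E)\le I(X;Y_B\mid Y_E)$. Conditioned on $Y_E$, only $n-m_E$ directions of $X$ remain unresolved, namely the null space of $H_E$. This gives at most $\tfrac{n-m_E}{2}\log\mathrm{SNR}+O(1)$.

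Both bounds carry pre-log $n-m_E$, so the SDoF equals $n-m_E$ regardless of the constants.
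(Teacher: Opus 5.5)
Your proposal is correct and matches the paper. The paper gives no separate proof; it presents the theorem as an immediate consequence of Lemma~\ref{lem:hsnr}, obtained by dividing the pre-log $\frac{n-m_E}{2}$ by $\frac12\log\mathrm{SNR}$ and identifying $r=m_E$ (full row rank, as with the orthonormal-row $A_E$ in the experiments) so that $n-m_E=n\Phi$. Your extra achievability and converse sketch for the $O(1)$ remainder goes beyond the paper, which leaves Lemma~\ref{lem:hsnr} unproved, and it is sound in outline.
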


\subsection{Strong Converse}

\begin{theorem}[Strong Converse]
\label{thm:strong_converse}
For DIM-SEC over $\mathbb{F}_q$ with $C_s = (n - r_E) \log q$, for any code sequence of rate $R > C_s$ and blocklength $N$:
\begin{enumerate}
\item If $P_e^{(N)} \to 0$, then $\liminf_{N \to \infty} I(M; Y_E^N)/N \ge R - C_s > 0$;
\item If $I(M; Y_E^N)/N \to 0$, then $\liminf_{N \to \infty} P_e^{(N)} \ge 1 - C_s/R > 0$.
\end{enumerate}
\end{theorem}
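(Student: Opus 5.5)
The plan is to reuse the converse chain from Theorem~\ref{thm:capacity} while keeping track of the error term instead of letting it vanish. The one structural fact I will take as given is the per-block bound already used there: conditioned on $Y_E^N$, each $X_i$ lies in an affine subspace of dimension $n-r_E$. Hence
\[
H(M\mid Y_E^N)\le H(X^N\mid Y_E^N)\le N C_s ,
\]
where the first inequality holds because $M$ is a function of $X^N$ (Bob's channel is noiseless and invertible, and decoding is deterministic). I will measure rate in the same units as $C_s$, so that $H(M)=NR$ for uniform $M$. Fano's inequality then gives
\[
H(M\mid Y_B^N)\le 1+P_e^{(N)}NR .
\]

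\textbf{Part 1.} I will write
\[
I(M;Y_E^N)=H(M)-H(M\mid Y_E^N)\ge NR-NC_s .
\]
This step never uses $P_e^{(N)}$: the entropy of $M$ is $NR$ by uniformity, and the equivocation cap is purely algebraic. Dividing by $N$ and taking $\liminf$ gives $R-C_s>0$ directly. If the paper insists on deriving $H(M)$ through Bob, I will instead use
\[
H(M)=I(M;Y_B^N)+H(M\mid Y_B^N),
\]
and absorb the Fano term. This costs at most $o(N)$ when $P_e^{(N)}\to0$, so the $\liminf$ bound is unchanged.

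\textbf{Part 2.} This part needs the Fano route in reverse. Bob's observation determines $X^N$, so I would like to use only what Eve's dimension deficit allows. The naive bound
\[
NR=H(M)\le I(M;Y_B^N)+1+P_e^{(N)}NR
\]
is useless, because $I(M;Y_B^N)$ can be as large as $Nn\log q$.

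To connect Bob's error to Eve's leakage I will instead take the standpoint of secrecy limiting how much message can be resolvable at all. Write
\[
H(M)=I(M;Y_E^N)+H(M\mid Y_E^N)\le o(N)+NC_s .
\]
For uniform $M$ of rate $R>C_s$ this is a contradiction unless the \emph{effective} message carried by the code has rate at most $C_s$. The honest formulation is therefore a Fano-type statement: the portion of $M$ that is simultaneously decodable by Bob and hidden from Eve has entropy at most $NC_s+o(N)$. To make this precise I will bound
\[
I(M;Y_B^N)-I(M;Y_E^N)\le H(M\mid Y_E^N)\le NC_s
\]
and combine it with Fano:
\[
NR\le I(M;Y_B^N)+1+P_e^{(N)}NR .
\]
I then need $I(M;Y_B^N)\le NC_s+I(M;Y_E^N)$, which gives
\[
NR(1-P_e^{(N)})\le NC_s+o(N)+1,
\]
and hence $\liminf P_e^{(N)}\ge 1-C_s/R$.

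\textbf{Main obstacle.} The key step is justifying $I(M;Y_B^N)\le H(M\mid Y_E^N)+I(M;Y_E^N)$. Since the right-hand side equals $H(M)$, this is trivially true, but it reproduces $NR$ rather than $NC_s$. So the real content must come from interpreting the secrecy constraint as restricting the message set, or from a list-decoding argument. My first attempt will be the following. Given $Y_E^N$, $M$ ranges over at most $q^{N(n-r_E)}$ values on a high-probability set. Bob's decoder, restricted to any fixed Eve observation, therefore succeeds on at most a $q^{NC_s/\log q}/q^{NR/\log q}$ fraction of messages in a strong-converse sense. The weaker Fano-style bound $1-C_s/R$ should follow by averaging over the high-probability set. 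I would check this step carefully, since the whole of part~2 depends on it.
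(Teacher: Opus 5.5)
You have a genuine gap, and it sits in your very first step. You justify $H(M\mid Y_E^N)\le H(X^N\mid Y_E^N)$ by saying $M$ is a function of $X^N$ because decoding is deterministic. Determinism only makes $\hat M$ a function of $Y_B^N$, and hence of $X^N$. The message $M$ itself is a function of $X^N$ only when $P_e^{(N)}=0$. The correct inequality is $H(M\mid Y_E^N)\le H(X^N\mid Y_E^N)+H(M\mid X^N)$, where Fano gives $H(M\mid X^N)=H(M\mid Y_B^N)\le 1+P_e^{(N)}NR$.

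This has consequences for both parts.
\begin{itemize}
\item In Part~1, your claim that the bound ``never uses $P_e^{(N)}$'' is false: an encoder that ignores $M$ gives $I(M;Y_E^N)=0$. The conclusion survives only because $P_e^{(N)}\to0$ makes $H(M\mid X^N)=o(N)$, which is exactly how the paper uses Fano. Your fallback about deriving $H(M)$ through Bob does not address this.
\item In Part~2, the ``contradiction'' you find is an artifact of the same false premise. The inequality you correctly name as the target, $I(M;Y_B^N)\le NC_s+I(M;Y_E^N)$, is never established. You bound $I(M;Y_B^N)-I(M;Y_E^N)$ by $H(M\mid Y_E^N)$, which is capped by $NC_s$ only under the false premise; otherwise it just returns $H(M)=NR$, as you noticed.
\item The list-decoding fallback (``given $Y_E^N$, $M$ ranges over at most $q^{N(n-r_E)}$ values'') rests on the same premise. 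It fails precisely for stochastic encoders, where $P_e^{(N)}$ stays bounded away from $0$.
\end{itemize}

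The missing idea is to bound the conditional mutual information by the entropy of $X^N$ rather than of $M$. Since $Y_B^N$ is an invertible function of $X^N$ and $Y_E^N$ is a function of $X^N$,
\[ I(M;Y_B^N)=I(M;X^N)=I(M;Y_E^N)+I(M;X^N\mid Y_E^N)\le I(M;Y_E^N)+H(X^N\mid Y_E^N)\le I(M;Y_E^N)+NC_s. \]
Combine this with $NR=I(M;Y_B^N)+H(M\mid Y_B^N)\le I(M;Y_B^N)+1+P_e^{(N)}NR$. You get the single inequality $NR\,(1-P_e^{(N)})\le NC_s+I(M;Y_E^N)+1$. Part~1 follows by letting $P_e^{(N)}\to0$, and Part~2 by letting $I(M;Y_E^N)/N\to0$. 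This is what the paper's terse appeal to ``the converse of Fano's inequality'' in part~(ii) must amount to: $H(M\mid X^N)\ge H(M\mid Y_E^N)-NC_s\ge N(R-C_s)-o(N)$, fed into Fano. No list-decoding argument is needed.
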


\begin{proof}
(i) By Fano, $P_e^{(N)} \to 0 \Rightarrow H(M \mid Y_B^N) = o(N)$. With $A_B$ invertible, $H(M \mid X^N) = o(N)$, so $H(M) = NR + o(N)$. For Eve, $H(M \mid Y_E^N) \leq H(X^N \mid Y_E^N) \leq N (n - r_E) \log q = N C_s$. Hence $I(M; Y_E^N)/N \geq R - C_s - o(1)$.

(ii) When $I(M; Y_E^N)/N \to 0$, $H(M \mid Y_E^N) = NR - o(N)$. Combined with $H(M \mid Y_E^N) \leq N C_s$, we get $R \leq C_s$. If $R > C_s$, by the converse of Fano's inequality, $P_e^{(N)} \geq 1 - C_s/R - o(1)$.
\end{proof}

\section{Dynamic Dimension Expansion}
\label{sec:dynamic}

Consider stacking $T \geq 1$ independent transmission slots into a super-vector $X^{(T)} = [X_1^T, \ldots, X_T^T]^T \in \mathbb{F}_q^{nT}$ with independent keys $K_t$ per slot.

\begin{proposition}[Equivocation Accumulation]
\label{prop:accumulation}
For a $T$-slot system with independent encoding per slot,
\begin{equation}
H(M^{(T)} \mid Y_E^{(T)}) = T \cdot \min(k, n - m_E) \cdot \log q.
\label{eq:accumulation}
\end{equation}
\end{proposition}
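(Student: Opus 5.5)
The plan is to reduce the $T$-slot claim to $T$ independent copies of the single-slot setting of Theorem~\ref{thm:equivocation}, and then use additivity of entropy over independent components. Write $M^{(T)}=(M_1,\dots,M_T)$ and $Y_E^{(T)}=(Y_{E,1},\dots,Y_{E,T})$. Here $Y_{E,t}=A_E X_t$, $X_t = G_{K_t}[M_t;R_t]$, and the triples $(M_t,R_t,K_t)$ are mutually independent across $t$. I read Eve's observation as block-diagonal, $\mathrm{diag}(A_E,\dots,A_E)$: Eve observes each slot separately with the same matrix $A_E$ of full row rank $m_E=r_E$. This is the only reading under which $m_E$ rather than some joint rank appears in the formula, and I will state it explicitly as the model assumption.

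The first step is to show that the pairs $(M_t,Y_{E,t})$, $t=1,\dots,T$, are mutually independent. This holds because each pair is a deterministic function of $(M_t,R_t,K_t)$, and these triples are independent by construction (independent keys and independent uniform padding per slot). Then the joint law factorizes, $P_{M^{(T)}Y_E^{(T)}}=\prod_t P_{M_tY_{E,t}}$. The chain rule together with conditional independence gives
\begin{equation*}
H(M^{(T)}\mid Y_E^{(T)})=\sum_{t=1}^T H(M_t\mid M_1^{t-1},Y_E^{(T)})=\sum_{t=1}^T H(M_t\mid Y_{E,t}).
\end{equation*}
The second equality uses that $M_t$ is independent of $(M_1^{t-1},Y_{E,1}^{t-1},Y_{E,t+1}^{T})$ given $Y_{E,t}$.

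The second step is to evaluate each summand with Theorem~\ref{thm:equivocation}. It gives $H(M_t\mid Y_{E,t})=\min(k,\rank(B_t))\log q$ with $B_t=A_EG_{2,t}$. Under the generic-rank event $\rank(B_t)=\min(m_E,n-k)$ this equals $\min(k,n-m_E)\log q$; this is exactly the case analysis (Cases A and B) in that proof, with $r_E=m_E$. Summing over $t$ gives $T\min(k,n-m_E)\log q$.

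The main obstacle is the genericity condition. Theorem~\ref{thm:equivocation} gives the clean formula only when each $B_t$ has generic rank. I would handle this by conditioning on the per-slot keys. Conditioning on $K^{(T)}$ makes each summand $\min(k,\rank(B_t))\log q$ deterministic. The formula then holds exactly on the event that all $T$ slots are generic, which by independence of the keys has probability $P_{\rm generic}^T$, with $P_{\rm generic}$ from~\eqref{eq:Pgeneric}. If the statement is read as an unconditional identity, I would restrict the key space $\mathcal K$ to keys for which $B$ attains generic rank, exactly as is implicitly done in the single-slot theorem. I would also note that without conditioning on $K$, Eve's ignorance of $K_t$ can only increase her equivocation, so the expression remains valid as a lower bound.
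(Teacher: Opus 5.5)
Your argument follows the paper's proof. Independence of the per-slot pairs $(M_t,Y_{E,t})$ reduces $H(M^{(T)}\mid Y_E^{(T)})$ to $\sum_t H(M_t\mid Y_{E,t})$, and each term is evaluated by Theorem~\ref{thm:equivocation}. The conditional-independence step, the block-diagonal reading of Eve's observation, and the genericity handling (probability $P_{\rm generic}^T$, or restricting $\mathcal{K}$) are details the paper's two-line proof leaves implicit.

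One step is false as written. You take $H(M_t\mid Y_{E,t})=\min(k,\rank(B_t))\log q$ and claim that under $\rank(B_t)=\min(m_E,n-k)$ it equals $\min(k,n-m_E)\log q$. In fact $\min(k,\min(m_E,n-k))=\min(k,m_E,n-k)$, which differs from $\min(k,n-m_E)$ in general. With $n=8$, $k=3$, $m_E=1$ you get $1$ instead of $3$. The expression $\min(k,\rank(B))\log q$ you copied from the statement of Theorem~\ref{thm:equivocation} is not what that theorem's proof computes. For a fixed key $K_t=\kappa_t$ and full-row-rank $A_E$, the count $|\mathcal{M}(y_E)|=q^{k+b-m_E}$ holds, and the posterior of $M_t$ is uniform on $\mathcal{M}(y_E)$ because each admissible $m$ has exactly $q^{n-k-b}$ compatible paddings. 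Together these give $H(M_t\mid Y_{E,t},K_t=\kappa_t)=(k+\rank(B_t)-m_E)\log q$. Since $\rank(B_t)\le\min(m_E,n-k)$, this is at most $\min(k,n-m_E)\log q$, with equality exactly on the generic event.

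You should use $(k+\rank(B_t)-m_E)\log q$ as the deterministic summand in your key-conditioning step. With $\min(k,\rank(B_t))\log q$ instead, your claim that the formula holds precisely when all slots are generic fails; in the example above it would never hold. With the corrected summand the argument goes through. So does your remark that $H(M^{(T)}\mid Y_E^{(T)})\ge H(M^{(T)}\mid Y_E^{(T)},K^{(T)})$, which makes the result a lower bound when Eve does not know the keys.
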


\begin{proof}
By independence, $H(M^{(T)} \mid Y_E^{(T)}) = \sum_{t=1}^T H(M_t \mid Y_{E,t})$. Each slot's equivocation is given by Theorem~\ref{thm:equivocation}.
\end{proof}

When $n - m_E \geq k$, perfect secrecy is maintained over arbitrarily long transmissions: the total message entropy is $kT \log q$ and Eve's conditional entropy is also $kT \log q$.

\textit{Comparison with the one-time pad.} Shannon's one-time pad~\cite{Shannon49} requires $H(K) \geq H(M)$. Maurer~\cite{Maurer93} showed that secret key agreement from common information can overcome this limitation when legitimate parties have correlated observations. In the DIM-SEC dynamic expansion scheme, $K_t$ only specifies the equation structure, requiring $\log |\mathcal{K}| \ll k \log q$ bits per slot (typically). In the regime $n - m_E \geq k$, the equivocation remains $H(M^{(T)} \mid Y_E^{(T)}) = kT \log q = H(M^{(T)})$, protecting a long message with short keys.

\section{Unified Representation of Existing Schemes}
\label{sec:unification}

Seven independently developed security schemes are unified under the common form $C_s = (n - r) \log q$. Table~\ref{tab:unified} summarizes their EAD parameters.

\begin{table}[t]
\centering
\caption{Unified EAD Representation of Seven Security Schemes}
\label{tab:unified}
\small
\begin{tabular}{@{}p{2.0cm}p{1.5cm}p{1.5cm}p{1.3cm}@{}}
\toprule
Scheme & $n$ Source & $r$ Source & $\Phi$ \\
\midrule
Matrix Embedding (F5/Wet Paper/STC)~\cite{Fridrich06,Filler11} & Coding (carrier dim) & Coding (key unknown) & $1.0$ \\
MIMO Wiretap~\cite{Khisti10a,Khisti10b,Oggier11} & Physical ($n_t$ antennas) & Physical ($n_e$ antennas) & $(n_t-n_e)/n_t$ \\
Secure Network Coding~\cite{Cai02,Silva08} & Coding (symbol ext.) & Physical + Coding (edges) & $(n-\mu)/n$ \\
FRFT Multi-Angle~\cite{Candan00,Ozaktas01} & Coding (angle selection) & Physical (time samples) & $(n_a-1)/n_a$ \\
Traffic Steganography~\cite{Cachin98} & Coding (feature dirs.) & Physical (stat. test) & $(D-1)/D$ \\
Group-Key Secure Summation~\cite{Zhao23} & Coding (group vars.) & Physical (colluding users) & $>0$ \\
MDS Secure Summation~\cite{Zhao24} & Coding (code length) & Physical (known users) & $(N-k+1)/N$ \\
\bottomrule
\end{tabular}
\end{table}

All seven schemes satisfy the unified security capacity formula
\begin{equation}
C_s = (n - r) \log q,
\label{eq:unified_cap}
\end{equation}
with differences only in the physical interpretation of $n$ and $r$.

\section{Post-Quantum Security}
\label{sec:quantum}

\begin{theorem}[Quantum Security]
\label{thm:quantum}
The information-theoretic security of underdetermined linear systems is preserved under quantum computation. Grover-accelerated brute-force search requires $\Omega(q^{(n - r_E)/2})$ operations, which remains exponential.
\end{theorem}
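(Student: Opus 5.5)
The statement has two parts, and I would handle them separately: an information-theoretic invariance claim, and a query-complexity lower bound for Grover search. For the first part, the equivocation of Theorem~\ref{thm:equivocation} and the converse of Theorem~\ref{thm:capacity} are statements about the joint distribution of $(M, Y_E)$ alone. The key observation is that $H(M \mid Y_E) = \min(k, n-r_E)\log q$ bounds \emph{any} estimator, classical or quantum, because Eve's quantum processing acts only on the classical record $Y_E$ (and possibly public side information independent of $M$). I would formalise this by modelling Eve's quantum strategy as a POVM $\{E_{\hat m}\}$ applied to a state $\rho_{Y_E}$ prepared from $Y_E$. Then $M \to Y_E \to \rho_{Y_E} \to \hat M$ is a Markov chain, and the data-processing inequality for classical-quantum states (Holevo) gives $I(M;\hat M) \le I(M;Y_E)$. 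In the regime $n - r_E \ge k$, Corollary~\ref{cor:perfect} gives $I(M;Y_E)=0$, so no quantum adversary gains anything. When $n-r_E<k$, the leakage stays at most $k-(n-r_E)$ symbols by Corollary~\ref{cor:partial}. Concretely, in Case~A of Theorem~\ref{thm:equivocation} the posterior over $\mathcal{M}(y_E)$ is uniform, so the success probability of any quantum guess is $|\mathcal{M}(y_E)|^{-1}$.

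For the second part, I would recast Eve's task as unstructured search. Assume Eve has an oracle $f$ that marks the correct element; for example, she can verify a candidate $x \in \mathcal{S}_E(y_E)$ against a later known-plaintext fragment or an authentication tag. By Theorem~\ref{thm:dsg}(ii), the candidate set is $\mathcal{S}_E(y_E)$, an affine subspace of size $N_0 = q^{n-r_E}$. Since $X$ is uniform over $\mathbb{F}_q^n$ given the encoding, the true $X$ is uniform over this set. Hence, from Eve's point of view, the marked element is a uniformly random point in a set of size $N_0$, with no exploitable structure beyond the oracle. I would then invoke the BBBV hybrid-argument lower bound (the optimality of Grover~\cite{Grover96}): any quantum algorithm that finds a uniformly random marked item among $N_0$ with constant probability needs $\Omega(\sqrt{N_0}) = \Omega(q^{(n-r_E)/2})$ oracle queries. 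This bound is matched by Grover's algorithm, so it is exponential in $n - r_E = n\Phi_E$.

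The main obstacle is justifying that the search is genuinely unstructured. The algebra of the problem, namely an affine subspace known to Eve and a key matrix $G_K$, might otherwise allow a period-finding or hidden-subgroup shortcut. I would address this by reducing to the key-independence established in Theorem~\ref{thm:equivocation}. Conditioned on $y_E$, the posterior of $X$ is uniform on $\mathcal{S}_E(y_E)$ (and that of $M$ is uniform on $\mathcal{M}(y_E)$). Any structure of $G_K$ that Eve lacks is therefore information-theoretically hidden. Every pair of candidate assignments consistent with $y_E$ then defines oracles that an adversary can distinguish only by querying, which is exactly the setting of the BBBV adversary bound. If this reduction is too delicate in general, I would state the bound under the explicit assumption that Eve's only access is via a membership or verification oracle. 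I would also note that in the perfect-secrecy regime no oracle exists at all, so the information-theoretic statement dominates.
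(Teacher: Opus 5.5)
Your proposal is correct once the black-box verification oracle is made an explicit hypothesis, and it takes a different and more rigorous route than the paper.

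\textbf{How the paper argues.} The paper's sketch is heuristic. It counts the $q^{n-r_E}$ solutions of $A_E x = Y_E$ and asserts that no structure distinguishes them. It remarks that HHL would only return a random sample of a uniform superposition, and cites Grover's speedup from $O(q^{n-r_E})$ to $O(q^{(n-r_E)/2})$. It then argues there is no hidden periodicity for a Shor-type attack, and instantiates $n-r_E=256$, $q=2$ to reach the $2^{128}$ level.

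\textbf{How your route differs.} You split the claim in two.
\begin{enumerate}
\item You handle the information-theoretic part by data processing. This covers every quantum strategy at once, instead of ruling out specific algorithms. Since $Y_E$ is classical, any quantum post-processing is just a stochastic map $Y_E \to \hat M$, so the classical inequality already suffices and Holevo is not needed.
\item You invoke the BBBV optimality of Grover. This is what actually supplies the $\Omega$ in the statement. The paper only quotes Grover's running time, which is an upper bound on the attack, so its sketch does not by itself establish the lower bound.
\end{enumerate}
What the paper's version adds is the discussion of structured quantum algorithms and a concrete parameter choice.

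\textbf{Correction on side information.} Side information that is merely independent of $M$ can break your Markov chain. Take $Z=R$: it is independent of $M$, yet $(Y_E,Z)$ reveals $A_E G_1 M$. In Case~A the leakage then jumps from $0$ to $\rank(A_E G_1)\log q$, which is generically positive. What you need is $M \to Y_E \to Z$.

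\textbf{Correction on the reduction to unstructured search.} Your attempted reduction from the uniform posterior to unstructured search does not go through, so the black-box assumption must be the stated hypothesis rather than a fallback. Average-case BBBV needs two things: a uniform prior on the marked item, and query-only access. The posterior gives you the first, but says nothing about how the verifier is implemented.

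Your known-plaintext example shows the problem:
\begin{itemize}
\item Checking $x$ against $m$ requires $H_K$. With $H_K$ in hand, $H_K x = m$ is a linear constraint that Eve can intersect with $A_E x = y_E$ by Gaussian elimination, with no search at all.
\item Even treated as a black box, this verifier marks $q^{n-r_E-\min(k,n-r_E)}$ points under the generic conditions of Theorem~\ref{thm:equivocation}, one per consistent padding.
\item Grover's cost therefore falls to $\Theta(q^{\min(k,n-r_E)/2})$, which is below $q^{(n-r_E)/2}$ whenever $k<n-r_E$.
\end{itemize}

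So the bound in the statement is specifically about locating $X$ with a single-marked, query-only verifier, such as a random-oracle tag on $X$. This is also what the paper's phrase that no structure distinguishes any particular solution implicitly assumes.
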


\begin{proof}[Proof Sketch]
The solution space of $A_E x = Y_E$ has size $q^{n - r_E}$. All solutions are algebraically equivalent---there is no structure that distinguishes any particular solution. The HHL algorithm~\cite{Harrow09} can prepare a uniform superposition over the solution space, but measurement returns only a random sample. Grover's algorithm~\cite{Grover96} provides quadratic speedup from $O(q^{n - r_E})$ to $O(q^{(n - r_E)/2})$, but cannot exploit hidden periodic structure (unlike Shor's algorithm~\cite{Shor97} for factoring). With security parameter $\lambda = n - r_E = 256$ and $q = 2$, the security level under Grover is $O(2^{128})$, satisfying the NIST post-quantum security standard~\cite{NIST}.
\end{proof}

\section{Numerical Experiments}
\label{sec:experiments}

All experiments are reproducible with open-source code. Experiments are organized into three groups: finite-field exact verification (Sec.~\ref{sec:exp_gf}), continuous-domain Monte Carlo simulation (Sec.~\ref{sec:exp_cont}), and application instances (Sec.~\ref{sec:exp_app}).

\subsection{Finite-Field Verification}
\label{sec:exp_gf}

\subsubsection{Equivocation Bound Tightness}

We fix $n = 8$, $k = 3$ over $\text{GF}(2)$. For each $m_E \in \{1, \ldots, 7\}$, we randomly generate $200$ pairs of $(H_K, A_E)$, exhaustively enumerate all $q^k = 8$ messages and $q^{n-k} = 32$ encoding seeds, and compute the exact $H(M \mid Y_E)$.

\begin{table}[t]
\centering
\caption{$\text{GF}(2)$ Equivocation Verification ($n=8$, $k=3$)}
\label{tab:gf2_equiv}
\small
\begin{tabular}{@{}cccccc@{}}
\toprule
$m_E$ & $n-m_E$ & Theory (bit) & Measured (bit) & Std. & Degen. Rate \\
\midrule
1 & 7 & 3.0 & 2.96 & 0.04 & 0.040 \\
2 & 6 & 3.0 & 2.95 & 0.06 & 0.055 \\
3 & 5 & 3.0 & 2.72 & 0.28 & 0.255 \\
4 & 4 & 3.0 & 2.65 & 0.36 & 0.335 \\
5 & 3 & 3.0 & 2.28 & 0.72 & 0.610 \\
6 & 2 & 2.0 & 1.65 & 0.36 & 0.340 \\
7 & 1 & 1.0 & 0.86 & 0.15 & 0.145 \\
\bottomrule
\end{tabular}
\end{table}

Table~\ref{tab:gf2_equiv} shows that the degeneracy rate increases as the dimension surplus $d = n - m_E$ shrinks. When $d \geq 5$ ($m_E \leq 3$), the degeneracy rate is below $6\%$. At $m_E = 5$, the degeneracy rate reaches $61\%$, consistent with the fact that a $5 \times 5$ random $\text{GF}(2)$ matrix has full-rank probability of only $\approx 0.30$.

\subsubsection{Effect of Field Size}

We fix $n = 6$, $k = 2$, $m_E = 3$ ($d = 3$, target rank $\min(k, d) = 2$) and vary $q \in \{2, 3, 5, 7, 11\}$ with $1000$ random trials each.

\begin{table}[t]
\centering
\caption{Degeneracy Probability vs.\ Field Size ($n=6$, $k=2$, $m_E=3$)}
\label{tab:field_size}
\small
\begin{tabular}{@{}ccccc@{}}
\toprule
$q$ & $\Pr[\rank < \min(k,d)]$ & Mean rank & Mean gap \\
\midrule
2 & 0.283 & 1.71 & 0.146 \\
3 & 0.159 & 1.84 & 0.081 \\
5 & 0.052 & 1.95 & 0.026 \\
7 & 0.023 & 1.98 & 0.012 \\
11 & 0.009 & 1.99 & 0.005 \\
\bottomrule
\end{tabular}
\end{table}

The degeneracy probability decays as $O(q^{-|d-k|})$, consistent with the theoretical prediction. At $q = 11$, degeneracy is below $1\%$, and the equivocation bound is tight in practical terms.

\subsection{Continuous-Domain Experiments}
\label{sec:exp_cont}

We use $n = 16$, $k = 4$, total power $P_{\rm total} = 1$, $200$ random trials per data point. Gaussian noise with variance $\sigma^2$; $Q_K$ is Haar-distributed orthogonal; $A_E$, $A_W$ have orthonormal rows.

\subsubsection{Security Metric}

We measure $\eta_s = 1 - I(M; Y_E) / H(M)$ over $\sigma^2 \in \{10^{-4}, 10^{-2}, 10^{-1}, 1\}$.

\begin{table}[t]
\centering
\caption{Security Metric $\eta_s$ vs.\ $m_E$ ($n=16$, $k=4$)}
\label{tab:security}
\small
\begin{tabular}{@{}cccccc@{}}
\toprule
$m_E$ & $\Phi$ & $\eta_s(\sigma^2=10^{-4})$ & $\eta_s(\sigma^2=10^{-2})$ & $\eta_s(\sigma^2=10^{-1})$ & $\eta_s(\sigma^2=1)$ \\
\midrule
1 & 0.938 & 0.906 & 0.946 & 0.979 & 0.970 \\
5 & 0.688 & 0.688 & 0.883 & 0.964 & 0.968 \\
9 & 0.438 & 0.420 & 0.656 & 0.857 & 0.920 \\
15 & 0.063 & 0.000 & 0.135 & 0.396 & 0.657 \\
\bottomrule
\end{tabular}
\end{table}

At low noise ($\sigma^2 = 10^{-4}$), $\eta_s$ decreases near-linearly with $\Phi$, consistent with finite-field theory. At high noise ($\sigma^2 = 1$), the Wyner effect provides an additional security margin.

\subsubsection{Covertness Metric}

\begin{table}[t]
\centering
\caption{Willie Detection Performance ($n=16$, $k=4$)}
\label{tab:covert}
\small
\begin{tabular}{@{}cccccc@{}}
\toprule
$m_W$ & $\Phi$ & $D_{KL}(\sigma^2=0.01)$ & $P_e$ bound & $D_{KL}(\sigma^2=2)$ & $P_e$ bound \\
\midrule
1 & 0.938 & 47.3 & $\approx 0$ & 0.048 & 0.424 \\
7 & 0.562 & 318.7 & $\approx 0$ & 0.385 & 0.328 \\
15 & 0.063 & 717.0 & $\approx 0$ & 0.827 & 0.179 \\
\bottomrule
\end{tabular}
\end{table}

\subsubsection{Joint Secrecy-Covertness Region}

At $\text{SNR} = -12$ dB ($\sigma^2 \approx 15.8$), we simultaneously measure $\eta_s$ and $\eta_c$ while varying $m = m_E = m_W$.

\begin{table}[t]
\centering
\caption{Joint Secrecy-Covertness Metrics ($\text{SNR} = -12$ dB)}
\label{tab:joint}
\small
\begin{tabular}{@{}ccccc@{}}
\toprule
$m$ & $\Phi$ & $\eta_s$ & $\eta_c$ \\
\midrule
2  & 0.875 & 0.995 & 0.968 \\
4  & 0.750 & 0.992 & 0.957 \\
8  & 0.500 & 0.981 & 0.929 \\
12 & 0.250 & 0.969 & 0.907 \\
15 & 0.063 & 0.964 & 0.895 \\
\bottomrule
\end{tabular}
\end{table}

The Pearson linear correlation between $\eta_s$ and $\eta_c$ is $r = 0.997$, strongly supporting Theorem~\ref{thm:unification}.

\subsubsection{SNR Operating Region}

Fixing $n = 16$, $m_E = m_W = 4$ ($\Phi = 0.75$), we scan SNR from $-15$ dB to $+25$ dB.

\begin{table}[t]
\centering
\caption{SNR Operating Region ($\Phi = 0.75$)}
\label{tab:snr}
\small
\begin{tabular}{@{}ccccc@{}}
\toprule
SNR (dB) & $\eta_s$ & $\eta_c$ & Joint Secure+Covert \\
\midrule
$-15$ & 0.996 & 0.977 & Yes \\
$-9$  & 0.984 & 0.911 & Yes \\
$0$   & 0.808 & 0.095 & No \\
\bottomrule
\end{tabular}
\end{table}

At $\Phi = 0.75$, the joint secure-covert operating region is $[-15, -9]$ dB. Below $-15$ dB, Bob's decoding fails; above $-9$ dB, Willie can detect.

\subsection{Application Instances}
\label{sec:exp_app}

\textit{FRFT multi-angle experiment.} Signal length $N = 128$, $n_a = 16$ secret rotation angles from a pool of $180$ candidates. Bob's recovery MSE is $8.3 \times 10^{-1}$, while Eve's random-guess MSE is $1.30$. Eve's full-angle scan cannot distinguish secret angles from random angles.

\textit{Traffic steganography experiment.} Feature dimension $D = 48$, $k = 8$ embedding directions. At $\text{SNR} = -1$ dB, Bob's BER is $3.4 \times 10^{-3}$ (reliable recovery).

\section{Discussion}
\label{sec:discussion}

\subsection{Key Management}

In the DIM-SEC framework, $K$ specifies the equation structure rather than directly masking the message. Under $n - m_E \geq k$, a key of $\log |\mathcal{K}| \ll k \log q$ bits can protect a message of $k \log q$ bits. This ``short-key protects long-message'' property arises from dimension expansion: the key controls dimension selection, not message masking.

\subsection{Active Attack Detection}

If Eve injects a forgery $\Delta$, Bob receives $Y_B = A_B X + \Delta$ and verifies $H_K \hat{X} = \hat{M}$. Eve succeeds only if $H_K A_B^{-1} \Delta = 0$, which requires $\Delta$ to lie in a specific $(n - r_E)$-dimensional subspace unknown to Eve. The forgery success probability is $P_{\rm forge} \leq q^{-(n - r_E)}$, decaying exponentially with the blind-zone dimension.

\subsection{Derived Results}

The following results follow directly from Theorems~\ref{thm:equivocation}--\ref{thm:covertness_cont}.

\begin{lemma}[Dimension-Power Tradeoff]
\label{lem:tradeoff}
For a fixed covertness level (fixed $W_2$), $N \propto \sqrt{r_W} \cdot P$: embedding dimension and power are interchangeable.
\end{lemma}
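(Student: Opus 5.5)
The plan is to derive the tradeoff directly from the low-SNR approximation \eqref{eq:W2approx} of Theorem~\ref{thm:covertness_cont}. We start from the exact expression \eqref{eq:W2}, $W_2 = \sqrt{r_W}\,\sigma\bigl(\sqrt{1+\mathrm{SNR}_0}-1\bigr)$ with $\mathrm{SNR}_0 = P/(N\sigma^2)$. Since $W_2$ is strictly increasing in $\mathrm{SNR}_0$, prescribing a level $W_2 = w$ determines $\mathrm{SNR}_0$ uniquely. Solving gives
\[
\sqrt{1+\mathrm{SNR}_0} = 1 + \frac{w}{\sqrt{r_W}\,\sigma},
\]
so that
\[
\frac{P}{N} = \sigma^2\left(\Bigl(1+\frac{w}{\sqrt{r_W}\,\sigma}\Bigr)^2 - 1\right) = \frac{2 w \sigma}{\sqrt{r_W}} + \frac{w^2}{r_W}.
\]
This exact relation is the starting point. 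It shows that, at fixed $\sigma$, $r_W$ and $w$, the ratio $P/N$ is a constant. Holding $r_W$ fixed, $N$ and $P$ can therefore be traded against each other linearly.

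Next we pass to the covert regime $\mathrm{SNR}_0 \ll 1$, which is equivalent to $w \ll \sqrt{r_W}\,\sigma$. There the quadratic term $w^2/r_W$ is dominated by the linear term, leaving $P/N \approx 2w\sigma/\sqrt{r_W}$. Equivalently, $N \approx \sqrt{r_W}\,P/(2w\sigma)$, which is just \eqref{eq:W2approx} inverted. With $w$ and $\sigma$ held fixed, this reads $N \propto \sqrt{r_W}\cdot P$, which is the claim. We will state explicitly that the proportionality constant is $1/(2w\sigma)$, and that the relation is meant in the first-order (low-SNR) sense, with relative error $O(\mathrm{SNR}_0)$ coming from the Taylor remainder.

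The main obstacle is interpretive rather than computational. The lemma must be read within the regime where \eqref{eq:W2approx} is valid, and the exact relation carries an extra $w^2/r_W$ correction that breaks exact proportionality in $\sqrt{r_W}$. We will make this precise by bounding the correction. Its ratio to the main term is
\[
\frac{w^2/r_W}{2w\sigma/\sqrt{r_W}} = \frac{w}{2\sqrt{r_W}\,\sigma} = \frac{\sqrt{1+\mathrm{SNR}_0}-1}{2} \le \frac{\mathrm{SNR}_0}{4}.
\]
This vanishes as $\mathrm{SNR}_0 \to 0$, so the proportionality holds asymptotically. Under this reading, ``interchangeable'' means that scaling $P$ by a factor $c$ while scaling $N$ by the same $c$ leaves $W_2$ unchanged, and the exact formula \eqref{eq:W2} shows this holds exactly, since $W_2$ depends on $(P,N)$ only through $P/N$.
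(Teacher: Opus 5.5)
Your proposal is correct and takes the same route as the paper, which simply asserts that the lemma follows directly from Theorem~\ref{thm:covertness_cont}: fixing $W_2$ in the low-SNR approximation \eqref{eq:W2approx} and inverting gives $N \approx \sqrt{r_W}\,P/(2W_2\sigma)$. Your extra step through the exact formula \eqref{eq:W2} is a refinement the paper omits. It shows that, at fixed $r_W$, $N$ is exactly proportional to $P$ (since $W_2$ depends on $(P,N)$ only through $P/N$), while the $\sqrt{r_W}$ scaling holds only to first order in $\mathrm{SNR}_0$.
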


\begin{lemma}[Multi-Willie Attenuation]
\label{lem:multiwillie}
For $T$ independent collaborating Willies with observation ranks $\{r_{W,i}\}$, the covertness condition becomes $\sqrt{\sum_i r_{W,i}} \cdot P / N \to 0$. The detection capability grows only as the square root of the number of collaborators.
\end{lemma}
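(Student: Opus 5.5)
The plan is to treat the $T$ Willies as one stacked observer and then reuse the Gaussian $W_2$ computation of Theorem~\ref{thm:covertness_cont} without change. Write $Y_W^{(i)} = A_{W,i} X + N_{W,i}$ for $i = 1,\dots,T$. Here the noises $N_{W,i} \sim \mathcal{N}(0,\sigma^2 I)$ are mutually independent, and each $A_{W,i}$ has orthonormal rows and rank $r_{W,i}$. Collaboration means the Willies perform a joint test on the concatenation $Y_W^{\rm tot} = [Y_W^{(1)};\dots;Y_W^{(T)}] = A_{\rm tot} X + N_{\rm tot}$, with $A_{\rm tot} = [A_{W,1};\dots;A_{W,T}]$ and $N_{\rm tot} \sim \mathcal{N}(0,\sigma^2 I)$. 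Under $H_0$ the covariance is $\Sigma_0 = \sigma^2 I$. Under $H_1$ it is $\Sigma_1 = \frac{P}{N} A_{\rm tot}A_{\rm tot}^T + \sigma^2 I$.

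The key step is to identify the spectrum of $A_{\rm tot}A_{\rm tot}^T$. The natural reading of ``independent'' Willies is that their observation row spaces are mutually orthogonal. In that case $A_{\rm tot}$ itself has orthonormal rows, and $A_{\rm tot}A_{\rm tot}^T$ has exactly $\sum_i r_{W,i}$ unit eigenvalues, with the rest zero. The proof of Theorem~\ref{thm:covertness_cont} then applies verbatim with $r_W$ replaced by $r_{\rm tot} = \sum_i r_{W,i}$. This gives $W_2^2 = r_{\rm tot}\,\sigma^2(\sqrt{1+\mathrm{SNR}_0}-1)^2$, and at low SNR $W_2 \approx \sqrt{\sum_i r_{W,i}}\,P/(2N\sigma)$. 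The covertness condition $W_2 \to 0$ is therefore $\sqrt{\sum_i r_{W,i}}\,P/N \to 0$. With $T$ Willies of equal rank $r$ the factor is $\sqrt{Tr}$, which is the claimed square-root growth in the number of collaborators.

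The main obstacle is that this orthogonality assumption is not automatic. Without it, $A_{\rm tot}A_{\rm tot}^T$ has nonzero eigenvalues $\lambda_j$ with $\sum_j \lambda_j = \tr(A_{\rm tot}^TA_{\rm tot}) = \sum_i r_{W,i}$, but they need not equal $1$. I would handle the general case through the general Gaussian $W_2$ formula, since the two covariances commute. It gives $W_2^2 = \sigma^2\sum_j(\sqrt{1+\lambda_j \mathrm{SNR}_0}-1)^2 \approx \frac{P^2}{4N^2\sigma^2}\sum_j \lambda_j^2$. Each $\lambda_j$ is at most $T$, but $\lambda_j^2 \le \lambda_j \max_j \lambda_j$, so the bound degrades by the overlap factor $\|A_{\rm tot}\|_2^2$. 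A further subtlety is that the stacked rank is at most $N$. I would therefore state the lemma under the orthogonal-rows (non-overlapping observation) hypothesis, and remark that overlapping observations only change the constant by the overlap factor. Finally, $W_2$ is a proxy rather than a detection-error bound, so the conclusion should be read in the same sense as Theorem~\ref{thm:covertness_cont}.
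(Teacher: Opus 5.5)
Your proposal is correct and takes the same route as the paper. The paper gives no separate argument: it says only that the lemma follows directly from Theorem~\ref{thm:covertness_cont}, i.e.\ stack the $T$ observations and substitute $r_W \to \sum_i r_{W,i}$, which is exactly your orthonormal-rows computation. The orthogonality hypothesis you make explicit is one the paper silently relies on, and it is essential rather than a constant-factor refinement: if all $T$ Willies observe the same $r$-dimensional subspace, then $A_{\rm tot}A_{\rm tot}^T$ has eigenvalue $T$ with multiplicity $r$, so $\sum_j \lambda_j^2 = T^2 r$ and $W_2 \approx T\sqrt{r}\,P/(2N\sigma)$ grows linearly in $T$. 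Your overlap factor $\|A_{\rm tot}\|_2^2 \le T$ is therefore not a mere constant, and the square-root claim genuinely fails without your hypothesis.
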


\begin{lemma}[Free Dimension]
\label{lem:free}
For fixed power $P$ and covertness requirement $W_2 \leq \varepsilon$, any $N \geq \sqrt{r_W} \cdot P / (2 \varepsilon \sigma)$ satisfies the covertness condition without requiring $P$ to decay.
\end{lemma}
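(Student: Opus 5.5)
The plan is to read Lemma~\ref{lem:free} as an inversion of the low-SNR covertness formula in Theorem~\ref{thm:covertness_cont}. For fixed $P$, $\sigma$ and $r_W$, I will show that the exact distance $W_2$ in \eqref{eq:W2} is a non-increasing function of $N$ that is bounded above by the first-order expression in \eqref{eq:W2approx}. Solving the inequality $\sqrt{r_W}\,P/(2N\sigma)\le\varepsilon$ for $N$ then gives the stated threshold. Throughout, $P$ is held constant; the only free variable is $N$. That is the sense of ``without requiring $P$ to decay''.

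The first step is to write the exact expression. From \eqref{eq:W2},
\[
W_2 = \sqrt{r_W}\,\sigma\left(\sqrt{1+\mathrm{SNR}_0}-1\right),
\qquad \mathrm{SNR}_0 = \frac{P}{N\sigma^2}.
\]
Set $g(x)=\sqrt{1+x}-1$ for $x\ge 0$. This function is increasing and concave, and $g(0)=0$, $g'(0)=1/2$. Concavity gives the tangent-line bound $g(x)\le x/2$ for every $x\ge 0$. So no small-SNR assumption is needed:
\[
W_2 \le \sqrt{r_W}\,\sigma\cdot\frac{P}{2N\sigma^2} = \frac{\sqrt{r_W}\,P}{2N\sigma}.
\]
This turns the approximation \eqref{eq:W2approx} into a rigorous upper bound valid for all $N$. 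It also shows that $W_2$ is non-increasing in $N$, because $\mathrm{SNR}_0$ decreases in $N$ and $g$ is increasing.

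The second step is to impose $N \ge \sqrt{r_W}\,P/(2\varepsilon\sigma)$. Substituting into the bound gives $W_2\le\varepsilon$ directly. Monotonicity in $N$ confirms that every larger $N$ also meets the requirement, which justifies the word ``any'' in the statement.

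The only real subtlety is the gap between the paper's ``$\approx$'' and the inequality needed here. The concavity bound $g(x)\le x/2$ closes that gap, and the resulting threshold is conservative in the right direction. I will also check that $r_W$ does not grow with $N$ in the intended setting; that is, Willie's observation rank is fixed while $N$ varies. Under that assumption the threshold is a finite constant independent of $N$, and the conclusion holds as stated.
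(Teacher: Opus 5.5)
Your proposal is correct and follows the same route as the paper, which obtains the lemma by inverting the $W_2$ expression of Theorem~\ref{thm:covertness_cont} for $N$; the paper gives no further argument beyond saying it ``follows directly'' from that theorem. Your one refinement is worthwhile: you use concavity, $\sqrt{1+x}-1\le x/2$ for $x\ge 0$, to turn the low-SNR approximation \eqref{eq:W2approx} into a genuine upper bound on the exact expression \eqref{eq:W2}. This makes the threshold $N\ge\sqrt{r_W}\,P/(2\varepsilon\sigma)$ rigorous at every SNR rather than only asymptotically, and shows that it is conservative.
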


\begin{lemma}[Optimal Joint Allocation]
\label{lem:allocation}
Under simultaneous secrecy ($N - r_E \geq k$) and covertness ($\sqrt{r_W} \cdot P / N \to 0$) constraints, the optimal message dimension is $k^* = N - r_E$, with the remaining $r_E$ dimensions allocated to random padding.
\end{lemma}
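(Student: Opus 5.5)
The plan is to treat Lemma~\ref{lem:allocation} as a constrained maximization over the message dimension $k$ for a fixed embedding dimension $N$. The covertness constraint, taken from Theorem~\ref{thm:covertness_cont}, depends only on $P$, $r_W$ and $N$; this is the key structural observation. By \eqref{eq:W2}, Willie's statistic is a function of $\Sigma_1 = \frac{P}{N}A_WA_W^T+\sigma^2 I$. That matrix involves only the marginal law of $X$. In the finite-field model, $X$ is uniform on $\mathbb{F}_q^N$ for every split $k + (N-k)$. In the Gaussian model, $X$ is isotropic whenever message and padding symbols have equal power. So the covertness constraint $\sqrt{r_W}\,P/N\to 0$ does not involve $k$. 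Once $P$ and $N$ are chosen to meet it, the choice of $k$ is decoupled and governed by secrecy alone.

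The secrecy side then reduces to choosing $k$ to maximize the securely delivered message entropy $k\log q$, subject to perfect secrecy $H(M\mid Y_E)=H(M)$. By Theorem~\ref{thm:equivocation} and Corollary~\ref{cor:perfect}, under generic conditions this holds exactly when $k \le N - r_E$. If $k > N-r_E$, Corollary~\ref{cor:partial} gives a leakage of $k-(N-r_E)>0$ symbols, which violates the constraint. Feasibility is therefore $k\le N-r_E$, and the objective is increasing in $k$, so $k^*=N-r_E$. The padding dimension is then $N-k^*=r_E$. This matches the converse of Theorem~\ref{thm:capacity}: the rate $k^*\log q=(N-r_E)\log q=C_s$ meets the secrecy capacity, so no allocation, linear or otherwise, can do better. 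The converse argument there bounds $H(M\mid Y_E^N)$ by $N(n-r_E)\log q$ independently of the encoder.

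One point needs checking: $k^*=N-r_E$ sits exactly at the boundary $r_E = N-k$, i.e.\ Case A of Theorem~\ref{thm:equivocation} with equality. Here $B=A_EG_2$ is a square $r_E\times r_E$ matrix, and perfect secrecy requires $B$ to be invertible. I will state that the conclusion holds on the generic event, whose probability is $P_{\rm generic}=\prod_{i=0}^{r_E-1}(1-q^{-(r_E-i)})$ from \eqref{eq:Pgeneric}. Alternatively, Alice can pick $P_K$ so that $A_EG_2$ is invertible when $A_E$ is known; otherwise the result is only generic.

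I expect the main obstacle to be justifying the decoupling step rigorously in the continuous model. Lemma~\ref{lem:cont_equiv} shows that secrecy there depends on the power split $P_M,P_R$, and a smaller padding power could in principle also affect covertness. I would handle this by fixing equal per-symbol power, so that $X=Q_KS$ is isotropic and $W_2$ is exactly \eqref{eq:W2} regardless of $k$. In the high-SNR regime, Theorem~\ref{thm:sdoF} gives $d_s=N-m_E$, and this confirms that the optimal message dimension is again $N-r_E$ secure degrees of freedom.
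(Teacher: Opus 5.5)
Your argument is essentially the paper's: the paper gives no separate proof and only asserts that the lemma follows from Theorem~\ref{thm:equivocation} (perfect secrecy iff $k \le N-r_E$ under generic conditions) and Theorem~\ref{thm:covertness_cont} (whose covertness condition does not involve $k$), so maximizing $k$ gives $k^*=N-r_E$; your decoupling step and your boundary-genericity caveat make this explicit and are more careful than the paper. One small correction to your alternative remark: since $H_KG_2=0$ forces $\Col(G_2)=\Ker(H_K)$ for every admissible $P_K$, the rank of $A_EG_2$ does not depend on $P_K$ at all, so to guarantee invertibility at the boundary Alice must instead choose $H_K$ with $\Ker(H_K)\cap\Ker(A_E)=\{0\}$.
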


\subsection{Limitations}

The EAD framework assumes: (i) a positive dimension surplus $n - r > 0$; (ii) that adversarial observations can be approximated by linear models; (iii) that encoding and decoding operate within a linear algebraic framework. For nonlinear observations or adversaries with additional prior information, the bounds of Theorem~\ref{thm:equivocation} require modification. In the continuous domain, only Gaussian inputs and noise are analyzed; for non-Gaussian distributions, the closed-form expressions of Lemmas~\ref{lem:cont_equiv}--\ref{lem:hsnr} do not hold exactly. General channel uncertainty analysis follows the framework of Lapidoth and Narayan~\cite{Lapidoth98}. Standard information-theoretic definitions follow Cover and Thomas~\cite{Cover06}.

\section{Conclusion}
\label{sec:conclusion}

We have defined the equation asymmetry degree (EAD) $\Phi = (n - r)/n$ as an independent information-theoretic security metric and established its quantitative relationship with equivocation, secrecy capacity, and detection error probability.

On finite fields $\mathbb{F}_q$, we proved the equivocation lower bound $H(M \mid Y_E) = \min(k, n - r_E) \log q$ with exact probabilistic conditions (Theorem~\ref{thm:equivocation}), the secrecy capacity $C_s = (n - r_E) \log q$ with complete converse (Theorem~\ref{thm:capacity}), and a strong converse (Theorem~\ref{thm:strong_converse}). In the continuous Gaussian domain, we derived a differential-entropy equivocation bound (Lemma~\ref{lem:cont_equiv}), high-SNR secrecy capacity asymptotics (Lemma~\ref{lem:hsnr}), a 2-Wasserstein covertness condition (Theorem~\ref{thm:covertness_cont}), and the EAD-SDoF equivalence $d_s = n \cdot \Phi$ (Theorem~\ref{thm:sdoF}).

Both secrecy ($\eta_s$) and covertness ($\eta_c$) were shown to be monotone functions of $\Phi$ (Theorem~\ref{thm:unification}), with a Pearson correlation of $0.997$ in continuous-domain experiments. Seven existing schemes were unified under $C_s = (n - r) \log q$. Post-quantum security follows from the hardness of underdetermined linear systems (Theorem~\ref{thm:quantum}).


\begin{thebibliography}{37}

\bibitem{Shannon49}
C.~E.~Shannon, ``Communication theory of secrecy systems,'' \emph{Bell Syst. Tech. J.}, vol.~28, no.~4, pp.~656--715, Oct.~1949.

\bibitem{Wyner75}
A.~D.~Wyner, ``The wire-tap channel,'' \emph{Bell Syst. Tech. J.}, vol.~54, no.~8, pp.~1355--1387, Oct.~1975.

\bibitem{Csiszar78}
I.~Csisz\'{a}r and J.~K\"{o}rner, ``Broadcast channels with confidential messages,'' \emph{IEEE Trans. Inf. Theory}, vol.~24, no.~3, pp.~339--348, May~1978.

\bibitem{Maurer93}
U.~M.~Maurer, ``Secret key agreement by public discussion from common information,'' \emph{IEEE Trans. Inf. Theory}, vol.~39, no.~3, pp.~733--742, May~1993.

\bibitem{Ozarow84}
L.~H.~Ozarow and A.~D.~Wyner, ``Wire-tap channel II,'' \emph{AT\&T Bell Labs Tech. J.}, vol.~63, no.~10, pp.~2135--2157, Dec.~1984.

\bibitem{Khisti10a}
A.~Khisti and G.~W.~Wornell, ``Secure transmission with multiple antennas I: The MISOME wiretap channel,'' \emph{IEEE Trans. Inf. Theory}, vol.~56, no.~7, pp.~3088--3104, Jul.~2010.

\bibitem{Khisti10b}
A.~Khisti and G.~W.~Wornell, ``Secure transmission with multiple antennas II: The MIMOME wiretap channel,'' \emph{IEEE Trans. Inf. Theory}, vol.~56, no.~11, pp.~5515--5547, Nov.~2010.

\bibitem{Oggier11}
F.~Oggier and B.~Hassibi, ``The secrecy capacity of the MIMO wiretap channel,'' \emph{IEEE Trans. Inf. Theory}, vol.~57, no.~8, pp.~4961--4972, Aug.~2011.

\bibitem{Liang09}
Y.~Liang, H.~V.~Poor, and S.~Shamai~(Shitz), ``Information theoretic security,'' \emph{Found. Trends Commun. Inf. Theory}, vol.~5, no.~4--5, pp.~355--580, 2009.

\bibitem{Bash13}
B.~A.~Bash, D.~Goeckel, and D.~Towsley, ``Limits of reliable communication with low probability of detection on AWGN channels,'' \emph{IEEE J. Sel. Areas Commun.}, vol.~31, no.~9, pp.~1921--1930, Sep.~2013.

\bibitem{Bloch16}
M.~R.~Bloch, ``Covert communication over noisy channels: A resolvability perspective,'' \emph{IEEE Trans. Inf. Theory}, vol.~62, no.~5, pp.~2334--2354, May~2016.

\bibitem{Wang16}
L.~Wang, G.~W.~Wornell, and L.~Zheng, ``Fundamental limits of communication with low probability of detection,'' \emph{IEEE Trans. Inf. Theory}, vol.~62, no.~6, pp.~3493--3503, Jun.~2016.

\bibitem{Che13}
P.~H.~Che, M.~Bakshi, and S.~Jaggi, ``Reliable deniable communication: Hiding messages in noise,'' in \emph{Proc. IEEE ISIT}, Istanbul, Turkey, Jul.~2013, pp.~2945--2949.

\bibitem{Kadampot20}
I.~A.~Kadampot, M.~Tahmasbi, and M.~R.~Bloch, ``Multilevel-coded pulse-position modulation for covert communications over binary-input discrete memoryless channels,'' \emph{IEEE Trans. Inf. Theory}, vol.~66, no.~10, pp.~6001--6023, Oct.~2020.

\bibitem{Cai02}
N.~Cai and R.~W.~Yeung, ``Secure network coding,'' in \emph{Proc. IEEE ISIT}, Lausanne, Switzerland, Jun./Jul.~2002, p.~323.

\bibitem{Silva08}
D.~Silva and F.~R.~Kschischang, ``Security for wiretap networks via rank-metric codes,'' in \emph{Proc. IEEE ISIT}, Toronto, ON, Canada, Jul.~2008, pp.~176--180.

\bibitem{Mojahedian17}
M.~M.~Mojahedian, A.~Gohari, and M.~R.~Aref, ``On the equivalency of reliability and security metrics for wireline networks,'' in \emph{Proc. IEEE ISIT}, Aachen, Germany, Jun.~2017, pp.~1172--1176. [arXiv:1609.04586]

\bibitem{Yeung06}
R.~W.~Yeung and N.~Cai, ``Network error correction, I: Basic concepts and upper bounds,'' \emph{Commun. Inf. Syst.}, vol.~6, no.~1, pp.~19--35, 2006.

\bibitem{Cachin98}
C.~Cachin, ``An information-theoretic model for steganography,'' in \emph{Proc. 2nd Int. Workshop Inf. Hiding (IH)}, Portland, OR, USA, Apr.~1998, pp.~306--318.

\bibitem{Fridrich06}
J.~Fridrich, M.~Goljan, and D.~Soukal, ``Wet paper codes with improved embedding efficiency,'' \emph{IEEE Trans. Inf. Forensics Security}, vol.~1, no.~1, pp.~102--110, Mar.~2006.

\bibitem{Filler11}
T.~Filler, J.~Judas, and J.~Fridrich, ``Minimizing additive distortion in steganography using syndrome-trellis codes,'' \emph{IEEE Trans. Inf. Forensics Security}, vol.~6, no.~3, pp.~920--935, Sep.~2011.

\bibitem{Candan00}
C.~Candan, M.~A.~Kutay, and H.~M.~Ozaktas, ``The discrete fractional Fourier transform,'' \emph{IEEE Trans. Signal Process.}, vol.~48, no.~5, pp.~1329--1337, May~2000.

\bibitem{Ozaktas01}
H.~M.~Ozaktas, Z.~Zalevsky, and M.~A.~Kutay, \emph{The Fractional Fourier Transform with Applications in Optics and Signal Processing}. Chichester, UK: Wiley, 2001.

\bibitem{Zhao23}
Y.~Zhao and H.~Sun, ``Secure summation: Capacity region, groupwise key, and feasibility,'' \emph{IEEE Trans. Inf. Theory}, vol.~70, no.~2, pp.~1376--1387, Feb.~2024. [Online Dec.~2023]

\bibitem{Zhao24}
Y.~Zhao and H.~Sun, ``MDS variable generation and secure summation with user selection,'' \emph{IEEE Trans. Inf. Theory}, vol.~70, no.~5, pp.~3667--3684, May~2024.

\bibitem{Wan22}
K.~Wan, H.~Sun, M.~Ji, and G.~Caire, ``On secure distributed linearly separable computation,'' \emph{IEEE Trans. Inf. Theory}, vol.~68, no.~12, pp.~8189--8216, Dec.~2022.

\bibitem{Grover96}
L.~K.~Grover, ``A fast quantum mechanical algorithm for database search,'' in \emph{Proc. 28th ACM STOC}, Philadelphia, PA, USA, May~1996, pp.~212--219.

\bibitem{Shor97}
P.~W.~Shor, ``Polynomial-time algorithms for prime factorization and discrete logarithms on a quantum computer,'' \emph{SIAM J. Comput.}, vol.~26, no.~5, pp.~1484--1509, Oct.~1997.

\bibitem{Harrow09}
A.~W.~Harrow, A.~Hassidim, and S.~Lloyd, ``Quantum algorithm for linear systems of equations,'' \emph{Phys. Rev. Lett.}, vol.~103, no.~15, p.~150502, Oct.~2009.

\bibitem{Villani03}
C.~Villani, \emph{Topics in Optimal Transportation}. Providence, RI, USA: AMS, 2003.

\bibitem{Villani09}
C.~Villani, \emph{Optimal Transport: Old and New}. Berlin, Germany: Springer, 2009.

\bibitem{Takatsu11}
A.~Takatsu, ``Wasserstein geometry of Gaussian measures,'' \emph{Osaka J. Math.}, vol.~48, no.~4, pp.~1005--1026, Dec.~2011.

\bibitem{Peyre19}
G.~Peyr\'{e} and M.~Cuturi, ``Computational optimal transport,'' \emph{Found. Trends Mach. Learn.}, vol.~11, no.~5--6, pp.~355--607, 2019.

\bibitem{Cover06}
T.~M.~Cover and J.~A.~Thomas, \emph{Elements of Information Theory}, 2nd~ed. Hoboken, NJ, USA: Wiley-Interscience, 2006.

\bibitem{Lapidoth98}
A.~Lapidoth and P.~Narayan, ``Reliable communication under channel uncertainty,'' \emph{IEEE Trans. Inf. Theory}, vol.~44, no.~6, pp.~2148--2177, Oct.~1998.

\bibitem{Levenshtein66}
V.~I.~Levenshtein, ``Binary codes capable of correcting deletions, insertions, and reversals,'' \emph{Sov. Phys. Dokl.}, vol.~10, no.~8, pp.~707--710, Feb.~1966.

\bibitem{NIST}
National Institute of Standards and Technology (NIST), ``Post-quantum cryptography standardization.'' [Online]. Available: \url{https://csrc.nist.gov/Projects/post-quantum-cryptography}

\end{thebibliography}
\end{document}